\pgfplotsset{compat=1.5}
\newtheorem{theorem}{Theorem}[section]
\newtheorem{lemma}[theorem]{Lemma}
\newtheorem{remark}[theorem]{Remark}
\newenvironment{proofof}[1]{\begin{trivlist} \item {\bf Proof
#1:~~}}
  {\qed\end{trivlist}}
\newcommand{\namedref}[2]{\hyperref[#2]{#1~\ref*{#2}}}
\newcommand{\eqnlab}[1]{\label{eq:#1}}
\newcommand{\eps}{\varepsilon}
\newcommand{\mdef}[1]{{\ensuremath{#1}}\xspace}  %
\DeclareMathOperator*{\poly}{poly}
\newcommand{\abs}[1]{\mdef{\left|#1\right|}}         %
\newcommand{\E}[2][]{\mdef{\underset{#1}{\mathbb{E}}\left[#2\right]}} %
\newcommand{\ignore}[1]{}
\newif\ifnotes\notestrue %
\newif\ifnotes\notesfalse
\newcommand{\mv}[1]{\textcolor{purple}{{\bf (Max:} {#1}{\bf ) }} \marginpar{\tiny\bf
             \begin{minipage}[t]{0.5in}
               \raggedright :3 
            \end{minipage}}}            							
\newcommand{\mv}[1]{}
\newcommand{\mps}[1]{\textcolor{red}{{\bf (MPS:} {#1}{\bf ) }} \marginpar{\tiny\bf
             \begin{minipage}[t]{0.5in}
               \raggedright :3 
            \end{minipage}}}            							
\newcommand{\mps}[1]{}
\renewcommand{\hat}{\widehat}
\renewcommand{\bar}{\widebar}
\renewcommand*{\@fnsymbol}[1]{\textcolor{mahogany}{\ensuremath{\ifcase#1\or *\or \dagger\or \ddagger\or
 \mathsection\or \triangledown\or \mathparagraph\or \|\or **\or \dagger\dagger
   \or \ddagger\ddagger \else\@ctrerr\fi}}}
\providecommand{\email}[1]{\href{mailto:#1}{\nolinkurl{#1}\xspace}}
\definecolor{bleudefrance}{rgb}{0.19, 0.55, 0.91}
\definecolor{mahogany}{rgb}{0.75, 0.25, 0.0}
\title{Online List Labeling in Linear Space with Near-Logarithmic Writes}
\title{Online List Labeling with Near-Logarithmic Writes}
\author{Martin P. Seybold\thanks{\email{martin.seybold@univie.ac.at}}}
\date{~}
\begin{document}
\maketitle

\begin{abstract}
In the Online List Labeling problem, a set of $n \leq N$ elements from a totally ordered universe must be stored in sorted order in an array with $m=N+\lceil\varepsilon N \rceil$ slots, where $\varepsilon \in (0,1]$ is constant, while an adversary chooses elements that must be inserted and deleted from the set.

We devise a skip-list based algorithm for maintaining order against an oblivious adversary and show that the expected amortized number of writes is $O(\varepsilon^{-1}\log (n) \poly(\log \log n))$ per update.
 
\end{abstract}

\section{Introduction} 

The \textsc{Online List Labeling} (OLL) problem, also known as the `file maintenance problem' as defined above, is a basic algorithmic task, rediscovered under various names, with numerous applications and a rich literature of solution strategies~\cite{dump-AnderssonL90,dump-BabkaBCKS19,dump-BenderDF05,dump-BenderFGKM17,dump-BenderH07,dump-BrodalDFILM10,dump-BrodalFJ02,dump-DevannyFGK17,dump-Raman99,dump-Willard82,dump-Willard92,saks2018online}.
In the OLL problem, we must store a set $X$ of $n\leq N$ keys from a totally ordered universe inside an array of size $m = N + \left\lceil \eps N \right\rceil$, while an adversary performs insertions and deletions on the set $X$.
The goal is to minimize the number of write operations in the array needed to maintain an ordered allocation, while processing the insertions and deletions into $X$ chosen by the adversary. 
Invariant of the problem is that $\abs{X} \in [0, N]$. %

In the polynomial regime, i.e., when $m = \poly{N}$, various solutions are known for obtaining an $O(\log n)$ write bound (cf. Table~1 in the survey~\cite{saks2018online}).
See also~\cite[Thm.~1]{BenderCDFZ02} and \cite[Thm.~5.3]{BlellochG07} based on Scapegoat~Trees~\cite[Thm.~4]{Andersson89} or Treaps~\cite[Thm.~4.7(ii)]{seidel1996randomized}.

Determining the complexity of the asymptotic number of write-operations needed for Online List Labeling in the linear regime, that is when $m=\Theta(N)$ is, however, still an important open problem\footnote{See~\cite[Section~8]{ItaiKR81} and the recent breakthrough work~\cite[Section~1]{bender2022online}.}.

Itai et al.~\cite{ItaiKR81} gave a deterministic algorithm with $O(\eps^{-1} \log^2 n)$ amortized write-cost, which matches the known $\Omega(\log^2 n)$ lower-bound of~\cite{BulanekKS12-log2n-lb-det-conf,BulanekKS12-log2n-lb-det-journal} for deterministic algorithms.
Dietz and Zhang~\cite{DietzZ90-log2N-lower-bound} showed that all algorithms that are `smooth'\footnote{A labeling algorithm is called smooth, if \emph{``the list items relabeled before each insertion form a contiguous neighborhood of the list position specified for the new item, and the new labels are as widely and equally spaced as possible.''} (see~\cite[p627]{DietzSZ04-journal-wc} and `gaps' in~\cite[p175]{DietzZ90-log2N-lower-bound}).} must have $\Omega(\log^2 n)$ write-cost, which applies even for randomized algorithms that (merely maintain fractional allocations and) are given the entire update sequence in advance (offline).
This lower bound is obtained on an insertion-only sequence where the adversary inserts the keys always at the front, i.e. in descending key order.
(See Chapter~7 in \cite{zhang1993density}.)

The recent breakthrough work of Bender, Conway, Farach-Colton, Komlós, Kuszmaul, and Wein \cite{bender2022online} showed how to obtain an expected $O(\eps^{-1}\log^{3/2} n)$ write-cost for updates.
Their solution is history-independent and uses a random shift and a certain top-down sampling condition that results in a weight-balanced Randomized Search Tree.
The authors also show that, for constant $\eps \leq 0.01$, every history-independent solution has $\Omega(\eps^{-1} \log^{3/2} n)$ expected write-cost.
This remarkable lower bound is obtained on an update sequence where the adversary alternately deletes the largest element from the set $X$ and inserts a random element with a uniform rank. %
(See Theorem~18, and Lemma~23, in~\cite{bender2022full}.)

The strongest general lower bound for the write-cost is $\Omega(\log n)$, due to the work of Bulánek, Koucký, and Saks~\cite{BulanekKS13-logN-lower-bound}.
We give the following remark.

\begin{remark}
One of the main obstacles in using the well-known randomized search trees, e.g.~\cite{seidel1996randomized}, for solving the problem is that they do not naturally provide a `good' weight-balance: 
The root of a randomized search tree on $n$ keys has expected parent-child weight-balance
$\geq \frac{1}{n} \sum_{i=1}^{\lfloor n/2\rfloor } \tfrac{n}{i} = \Theta( \log n)$.
\end{remark}

\subsection{Contribution, Paper Outline, and Open Problems}
We devise an algorithm that solves the Online List Labeling problem with expected amortized write-cost $O(\eps^{-1} \log (n)\log \log n)$ per update, assuming $\eps \in (0,1]$ is constant.
The proposed algorithm improves on the recent breakthrough result of~\cite{bender2022online} for the linear-space regime.

In Section~\ref{sec:algo}, we introduce a randomized algorithm that uses the interval tree, induced by a skip-list, to define a smooth allocation mechanism, which utilizes basic properties of skip-lists that we show in Section~\ref{sec:analysis}.
To ensure that we can process all updates using this approach, we introduce a `proactive reallocation' mechanism that depends on the history of insertions and deletions performed by the adversary (cf. Lemma~\ref{lem:no-overflow}).
Section~\ref{sec:smooth-gamma} shows that a parameter of $\Theta(1/\log n)$ is sufficiently small for proactive reallocations that use the smooth allocation mechanism, leading to a write-bound that matches the known lower bound for the class of smooth algorithms.
Section~\ref{sec:adaptive} then introduces our non-smooth allocation mechanism that allows to raise the threshold parameter of the proactive mechanism to a constant, at expense of repeating the non-smooth allocation $O(\log \log n)$ times.

The proposed algorithm and analysis are with respect to $\eps$ being a constant in $(0,1]$.
An open problem for future work is the technical, nevertheless important, extension to non-constant $\eps$, for example $\eps = \Theta(1/\log n)$.
Another open problem is whether obtaining algorithms with an $\widetilde{O}(\eps^{-1} + \log n)$ write-bound is possible for the problem. 
One of our lemmas, Lemma~\ref{lem:parent}, seems to allow a sharper upper bound than $O(\eps^{-1}\log n)$. 
However, the $1/\eps$ factor in our final bound is very closely related to the proactive mechanism of our algorithm, which is why additional algorithmic ideas seem necessary in attempts to improve the upper bound.

\section{Using the Skip-List for Maintaining an Allocation} \label{sec:algo}

Let $\eps \in (0,1]$ be constant.
By the standard periodic-rebuilding argument, it suffices to give an algorithm for the case that the number of keys $n$ in the dynamic set $X$ is within 

\begin{equation} 
(1-\eps/4)\bar n   \quad \leq \quad   n   \quad \leq   \quad (1+\eps/4)\bar n 
\end{equation}

 of some $\bar n \leq N$, since rewriting an entirely new allocation in the array whenever the number of updates\footnote{Counting both, insertions and deletions, to $X$.} $\delta$ to $X$ reaches a constant fraction of the set's size at the time of the last rebuild, 
i.e. $\delta = \lfloor \eps \bar n/4\rfloor$, has amortized
$O(\bar n + \delta)/\delta=O(\eps^{-1})$ 
write-cost per update.
By using only the first 
$m':=\min\{m, \lceil (1+\eps) (1+\eps/4)\bar n\rceil\}$ array slots, an algorithm may further assume that the density remains high
$\frac{n}{m'}>1-2\eps$ until the next periodic-rebuild.
That is, the density remains in the range

\begin{equation} \label{eq:invariant-density}
    1-2\eps
      < \frac{1-\eps/4}{(1+\eps)(1+\eps/4)+1/\bar n} 
\leq  
    \frac{n}{m'}
\leq
    \frac{1+\eps/4}{(1+\eps)(1+\eps/4)}
    =1-\frac{\eps}{1+\eps} \leq 1-\frac \eps 2 ~.
\end{equation}

This, seemingly arbitrary, focus on a dense parameter constellation 
will be crucial for a simple analysis of the write-efficiency (\autoref{thm:main-result}) of our `proactive' algorithm of this section.

To simplify the presentation, we assume that $X$ is a set of $n$ points from $\mathbb{R}$. 
This assumption poses no restriction in what follows, since we will merely require that the open interval $(x,x')$ of two distinct keys $x,x' \in X$ is well-defined on the totally ordered universe.

Our algorithm uses the interval tree that is induced by a skip-list~\cite{Pugh90} to maintain a mapping of the keys in the skip-list to positions in the array.
(See Chapter~1.4 in~\cite{mul-book} for the description of skip-lists on which  our notation is based on.)
A skip-list for the keys $X$ consists of $r \geq 1$ levels that store a `gradation' of the elements, i.e. 
\[
    X=X_1 ~\supseteq~ X_2 ~\supseteq~ \dots ~\supseteq~ X_{r-1} ~\supseteq~ X_r = \emptyset~,
\]

where the keys $X_{\ell+1}$ are obtained, bottom-up, from those keys in $X_\ell$ by independent tosses of a fair coin.
The keys in $X_\ell \setminus X_{\ell+1}$ are called {\bf \em keys on level $\ell$}. 
Each $X_\ell$ induces a partition of the real line $\mathbb{R} \setminus X_\ell$ into \emph{open} intervals, which we call the {\bf \em intervals of level $\ell$}.
The interval $(-\infty,+\infty)$ on level $r$ is called the root, and the skip-list {\bf \em size} is the number of its open intervals 
$\left(2r + 2\sum_{x\in X} level(x)\right)/2$.

For a level $\ell$ interval $I$ we call the level $\ell+1$ interval $U \supseteq I$ the {\bf \em parent} of $I$.
The level $\ell$ intervals contained in an interval $U$ of level $\ell+1$ are called the children of the interval $U$.
Note that the parent has only one child, if both intervals have equal boundaries ($U=I$).
Let $X(U) := X \cap U$ denote the set of keys that are contained in the interval $U$, and the {\bf \em weight} of a level $\ell$ interval $U$ is the number of intervals that form its subtree (including the subtree root $U$), i.e. 
\begin{align*}
    w(U)    =   \ell+\sum_{x\in X(U)} level(x)  ~~>~~ |X(U)|~.
\end{align*}
Thus, weight (strictly) decreases on any root-to-leaf path, and remains
 $w(U) \geq 1+|X(U)|$ and $w(U)\leq \ell+|X(U)|(\ell-1)$.
Since $\Pr[level(x)=i]=1/2^i$ and $\E{level(x)}=2$, we have for any level $\ell$ interval $U$ that
$ \E{w(U)}\leq \ell + 2|X(U)|$.

For an interval $U$ that is the parent of the intervals $(I_1,\ldots,I_d)$, we call the keys 
$
    sep(U) := X(U) \setminus (X(I_1) \cup \ldots \cup X(I_d))
$ 
the {\bf \em separators} of interval $U$. 
Note that there are exactly $d-1$ keys in $sep(U)$ since the $d\geq1$ children are open intervals, i.e. key intervals of the form $(x,x')$, and that all keys are on the same level (one beneath the level of $U$).

For example, an interval $U$ that is the parent of $(I_1,\ldots,I_d)$ has $d-1$ separator keys and weight $w(U) = 1+w(I_1)+\ldots +w(I_d)$. 
The weight of intervals on the lowest level $X_1$ is one, as they contain no keys from $X$.
See Figure~\ref{fig:skip-list} (top).

\begin{figure}
    \centering
    \includegraphics[width=\textwidth]{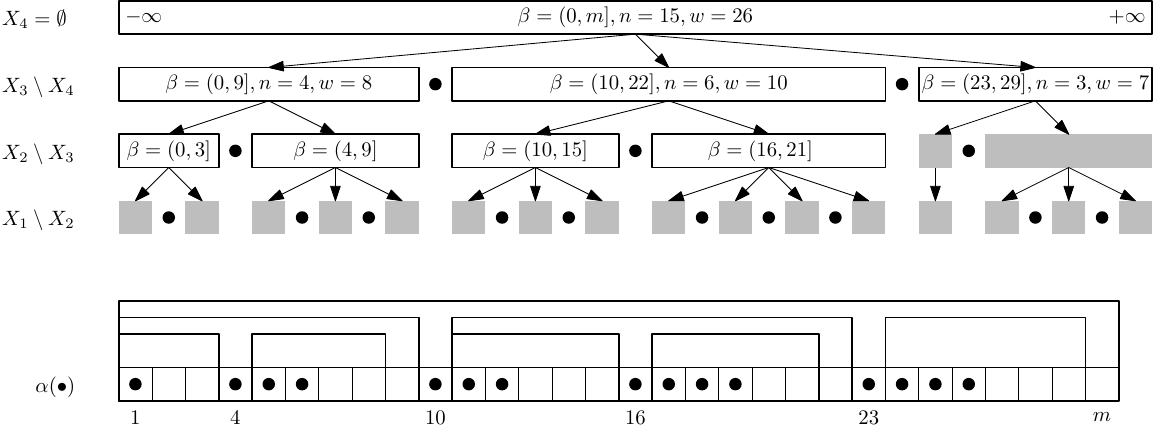}
\caption{
    Interval tree of a skip-list (top) and allocation of keys (bottom).
    Allocated intervals and their respective index-ranges are shown as solid boxes, non-allocated intervals in gray.
    The keys on level $\ell$ are $X_\ell \setminus X_{\ell+1}$ and the (open) intervals of level $\ell$ are $\mathbb{R} \setminus X_\ell$.
    The example shows an initial allocation with five runs.
}
    \label{fig:skip-list}
\end{figure}

\subsection{Maintaining Allocations by Reallocating Certain Subtrees}
To simplify presentation, we first discuss how skip-lists can be used to maintain a fractional allocation 
$\alpha : X \to [1,m]$, that is all $x\neq x'$ from $X$ have $\alpha(x') \notin (\alpha(x)-1,\alpha(x)]$.
The extension to integral allocations $\hat \alpha: X \to \{1,\ldots,m\}$, are due to a simple rounding scheme that uses the ceiling function and always places elements leftmost.

Given the interval tree of a skip-list, we assign an index range as {\bf \em budget} $\beta(I) \subseteq (0,m]$ to the intervals in a top-down fashion.
The root-interval $U$ has the budget $\beta(U):=(0,m]$ at all times, where $m$ is the total number of array slots.
Informally, our smooth allocation splits the slack, i.e. the unused budget $|\beta(U)|-|X(U)|$, among the children $(I_1,\ldots,I_d)$ of interval $U$ proportional to their weight~$w(I_i)$ and the $i$-th separator key in $sep(U)$ is stored accordingly  within the index-range $\beta(U)$ of the array, which determines our {\bf \em allocation} $\alpha : X \to [1,m]$.

Formally, for an interval $U$ with budget $\beta(U)=(a,b]$ let

\begin{align}
    &\text{``Slack'':}&~ 
    \Delta(U) := |\beta(U)|-|X(U)| = b-a -|X(U)|\\
    &\text{``Relative Slack'':}&~     
    \eps_U    :=   \frac{\Delta(U)}{w(U)} ~.
\end{align}

If $U$ has only one child, i.e. $sep(U)=\emptyset$, we reserve one slot for $U$ and assign the entire remaining budget $\beta(I)=(a,b-1]$ to this child. %
Otherwise, the $i$-th separator key $x_i \in sep(U)$ is assigned to 

\begin{align*}      \label{eq:separator-location}
    \alpha(x_1) := a 
        + &\left( |X(I_1)| + \frac{\Delta(U)-1}{w(U)-1} w(I_1)\right) + 1
\\
    \alpha(x_{i+1}) := \alpha(x_i) 
        + &\left(|X(I_i)| + \frac{\Delta(U)-1}{w(U)-1} w(I_{i})\right) + 1~.
\end{align*}

Let $a < s_1 < \ldots < s_{d-1} < b$ be the positions assigned to the keys of $sep(U)$.
We use those positions to partition the range $\beta(U)$ for the children so that all have equal relative slack $\eps_{I_1}=\ldots=\eps_{I_d}$, i.e.
\begin{align}
    \beta(I_1)&:=(a,s_1-1]\\
    \beta(I_i)&:=(s_{i-1},s_i-1]\\
    \beta(I_d)&:=(s_{d-1},b-1]   ~.
\end{align}

For example, this partition leaves \emph{one slot} at the rightmost end of the budget in $\beta(U)$ unused, and the total slack 
$
\frac{\Delta(U)-1}{w(U)-1}\sum_{i=1}^d w(I_i)=\frac{\Delta(U)-1}{w(U)-1}(w(U)-1) = \Delta(U)-1
$ is conserved.

Our top-down allocation of budget and separator positions proceeds recursively until an interval $U$ is reached where one of its children $\{I_i\}$ would receive a budget with less than one slot of slack
\begin{align}                           \label{eq:tdalloc-termination}
&\text{``Termination'':}&~   \Delta(I_i) &< 1 ~.%
\end{align}
If a child $I_i$ of $U$ satisfies the Termination criteria~\eqref{eq:tdalloc-termination}, we assign all keys $x \in X(U)$ as one {\bf \em consecutive run} of sorted keys to the index-range $\beta(U)$.
We write the run at an arbitrary position in $\beta(U)$, say leftmost.

Note that the last slot of $\beta(U)$ remains vacant, that is the Termination Criterion of our top-down approach yields that 
{\bf \em allocated intervals} have at least one unit of slack 

\begin{align}                           \eqnlab{invariant-fixed}
    &\text{``Allocation-Invariant'':}&~
    |\beta(U)|-\bar n(U)~=~\bar \Delta(U) \quad &\geq 1~, %
\end{align}

where $\bar n(U)$ denotes the interval's key-count $|X(U)|$ at the time of its allocation (of a budget).
We call all remaining intervals of the tree that are not assigned a budget by the top-down allocation {\bf \em {non-allocated} intervals}\footnote{
For example, every allocated interval $U$ has that either (i) all children $I$ of $U$ are allocated intervals, or (ii) all descendants of $U$ are non-allocated intervals. 
}.
This concludes the definition of our top-down allocation procedure, which we extend to our dynamic, smooth allocation next.

The `main idea' is to execute the top-down allocation repeatedly on certain subtree roots of the interval tree, which results in a {\bf \em reallocation} of the keys of the subtree.
To this end, we identify the interval tree nodes with the nested partition of the array's index-range they are allocated to. 
See Figure~\ref{fig:skip-list} (bottom).
This allows us to keep track of the current weight $w(U)$ and key-count $n(U)$ of interval $U$, its weight $\bar w(U)$ and key-count $\bar n(U)$ when interval $U$ was allocated last, the weight of the updated keys $\delta(U):=\sum_x level(x)$ that happened in $U$ after $U$ was allocated last, and the slack $\bar\Delta(U)$ when $U$ was allocated last.

Now, the insertion of a new key $x$ on skip-list level $\ell$ splits one interval of level $\ell$ into two intervals (left and right of the new element $x$), whereas deletion of $x$ from $X$ merges two level $\ell$ intervals that have $x$ as common key in their interval boundaries.
For maintaining an allocation $\alpha$ during updates to $X$, we proceed as follows:
After inserting/deleting a level $\ell$ key $x$ in the interval tree, we reallocate the entire subtree of the parent interval $U=parent(I)$ of the lowest allocated interval $I \ni x$ according to the current key-counts and weights within the space budget $\beta(U)$ that was allocated to the interval $U$.
If $I$ is the root of the interval tree, we simply rebuild the entire allocation of all keys by calling \verb-AllocateTD-($U$) on the new root, which also redistributes all budget with respect to the current weight of the root.

This strategy alone might fail once too many insertions into an interval have occurred. 
Indeed, if the space budget $\beta(U)$ of the parent interval $U$ does not increase while performing repeated insertions, %
 we eventually run out of space for inserting new keys.
This is when the current key-count has increased to $|X(U)| = |\beta(U)|$, thus violating the pre-condition $|\beta(U)|-|X(U)|\geq 1$ that is required for executing our top-down allocation from $U$.

To avoid this pathological case, our algorithm `proactively' ensures that top-down reallocations remain \emph{always} possible, as follows.
As soon as the weight-change $\delta(I)$ that happened in some allocated {interval~$I$} reaches a $\gamma$-fraction of %
its budget-slack\footnote{An early draft of this paper erroneously stated proportional to the {\bf key-count} here, instead of {\bf slack}, which is indeed insufficient to prevent overflows.}, i.e. when
\begin{align} \label{eq:proactive-triggering}
    \text{``Proactive $\gamma$-Trigger''}:&~
    & \delta(I)/\gamma \quad\geq \quad 
    \bar \Delta(I)~,
\end{align} 
for some parameter $\gamma \in (0,1]$, 
we reallocate the space budget $\beta(U)$ of the parent interval $U=parent(I)$ according to the current weights in the entire subtree of $U$. 
The amortized write-cost of such a reallocation is $\leq w(U)/\delta(I)$, as only the keys in $X(U)$ are rewritten.
Note that the parent-child relations of intervals remain unchanged in proactive reallocations, merely their index regions in the array change.
The dependence between the parameter $\gamma$ and the write-bound for our smooth and adaptive allocation formulas is analyzed in Sections~\ref{sec:smooth-gamma} and~\ref{sec:adaptive}, respectively.

\begin{lemma}\label{lem:no-overflow}
Let $\gamma \in (0,\tfrac12]$.
After every update operation, every allocated interval $U$ satisfies the pre-condition $\Delta(U) \geq 1$ of having at least one free slot.

In particular, reallocation from any allocated interval remains possible.
Thus, the allocation maintained by the interval tree does not overflow.
\end{lemma}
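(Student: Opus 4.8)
The plan is to prove, by induction on the update sequence, the following strengthening of the claim: after every update, every allocated interval $U$ satisfies $\delta(U) < \gamma\,\bar\Delta(U)$; call this property $(\star)$. The lemma then follows at once: since every update increments the affected $\delta$-counter by $level(\cdot)\ge 1$ while changing the corresponding key-count by $\pm1$, one always has $n(U)-\bar n(U)\le\delta(U)$, so together with the Allocation-Invariant~\eqref{eq:invariant-fixed} (which gives $\bar\Delta(U)\ge 1$) we obtain
\[
\Delta(U)=\bar\Delta(U)-(n(U)-\bar n(U))\ \ge\ \bar\Delta(U)-\delta(U)\ >\ (1-\gamma)\,\bar\Delta(U)\ \ge\ \tfrac12 ,
\]
whence $\Delta(U)\ge 1$ because $\Delta(U)$ is an integer. (The hypothesis $\gamma\le\tfrac12$ is used only to guarantee $1-\gamma\ge\tfrac12$.)

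For the base case, immediately after a periodic rebuild, i.e.\ a call of \texttt{AllocateTD} on the root, every allocated interval $U$ has $\delta(U)=0<\gamma\bar\Delta(U)$. For the inductive step, assume $(\star)$ holds before an update of a level-$\ell$ key $x$. First I would perform the structural change to the interval tree (an interval split on an insertion, an interval merge on a deletion) and update the counters: $\delta(W)$ increases by $level(x)=\ell$ for exactly the allocated intervals $W$ containing $x$ in their interior — these are exactly the allocated intervals on the root-to-$x$ path, and they form a contiguous path starting at the root — while $n(W)$ changes by $\pm1$ for the same intervals, all other counters stay put, and $n(W)-\bar n(W)\le\delta(W)$ is preserved. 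Now let $W^{\star}$ be the topmost interval on that path that is either a violator of $(\star)$ at this moment, or equals the lowest allocated interval containing $x$ (the node whose parent the algorithm naively reallocates). If $W^{\star}$ is the root, rebuild the entire allocation from the root; this is legal since for the root interval $\Delta = m-|X|\ge\ceil{\eps N}\ge 1$ by the problem's invariant, and it resets every $\delta$ to $0$, so $(\star)$ holds afterwards. Otherwise, reallocate the entire subtree of $P:=parent(W^{\star})$.

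Two points remain. Legality of reallocating $P$: since $P$ lies strictly above $W^{\star}$ on the path it is not a violator, i.e.\ $\delta(P)<\gamma\bar\Delta(P)$; hence $n(P)-\bar n(P)\le\delta(P)<\gamma\bar\Delta(P)$ and, as in the display above, $\Delta(P)>(1-\gamma)\bar\Delta(P)\ge\tfrac12$, so $\Delta(P)\ge 1$ and \texttt{AllocateTD}$(P)$ can run — and if it terminates at $P$ as a single run, that run of $n(P)$ keys still fits, since $|\beta(P)|=n(P)+\Delta(P)\ge n(P)+1$. Restoration of $(\star)$: reallocating $P$'s subtree sets $\delta=0$ and $\bar\Delta=\Delta\ge 1$ for every allocated interval in it; every interval that currently violates $(\star)$ lies in $W^{\star}$'s subtree, hence in $P$'s subtree, so all violators are cleared; and the remaining intervals keep their counters, while those above $P$ on the path, though their $\delta$ was bumped, are not violators because $W^{\star}$ is the topmost one. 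The new key $x$ gets a slot because it lies in $P$'s rewritten subtree. The deletion case is the same argument and is in fact easier in the overflow direction, since a deletion only decreases every $n(W)$. The two concluding sentences of the lemma are then immediate corollaries: reallocation from any allocated interval is possible since each has $\Delta\ge 1$, and as all budgets are nested inside the root budget $(0,m]$, the allocation never overflows.

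The step I expect to be the main obstacle is choosing the invariant and the reallocation target correctly. The invariant must be phrased in terms of the allocation-time slack $\bar\Delta$, not the key-count (cf.\ the footnote to~\eqref{eq:proactive-triggering}). The delicate point in the inductive step is that an insertion can momentarily drop the slack of the naive target $parent(I)$ to $0$; but precisely when that occurs, $parent(I)$ satisfies $\delta\ge\bar\Delta>\gamma\bar\Delta$, i.e.\ it is itself a violator of $(\star)$, so escalating to $P=parent(W^{\star})$ for the topmost violator $W^{\star}$ always lands on an interval with enough slack, and the margin $1-\gamma\ge\tfrac12$ is exactly what makes the slack estimate for $P$ go through.
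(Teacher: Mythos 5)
Your proposal is, at its core, the paper's own argument run forwards instead of backwards: the paper takes the first moment an allocated interval has $\Delta(U)<1$ and derives a contradiction from the fact that the proactive $\gamma$-trigger would have fired strictly earlier, whereas you maintain the contrapositive as an explicit invariant $(\star)$: $\delta(U)<\gamma\bar\Delta(U)$ for all allocated $U$, and both proofs rest on the same inequality $\Delta(U)\geq\bar\Delta(U)-\delta(U)>(1-\gamma)\bar\Delta(U)$ together with $n(U)-\bar n(U)\leq\delta(U)$. Your extra bookkeeping (only the allocated intervals on the root-to-$x$ path have their counters bumped, escalate to the parent of the topmost violator, a reallocation there resets all violators below) is a faithful reading of the proactive mechanism, and the restoration of $(\star)$ after that single reallocation is argued correctly.

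One step fails as written, though: you pass from $\Delta(U)>\tfrac12$ to $\Delta(U)\geq 1$ ``because $\Delta(U)$ is an integer,'' and you invoke the same rounding again when legalizing the reallocation from $P=parent(W^\star)$. The lemma is stated for the fractional allocation: budget endpoints are the fractional separator positions produced by the allocation formula, so $|\beta(U)|$, $\Delta(U)$ and $\bar\Delta(U)$ are in general not integers, and $\Delta(U)>\tfrac12$ does not yield $\Delta(U)\geq 1$. The integral quantities are $\delta(U)$ (a sum of levels) and the key-counts, and this is precisely why the paper's proof splits into cases on $\bar\Delta(U)$: if $\bar\Delta(U)\geq 2$ then $(1-\gamma)\bar\Delta(U)\geq 1$ outright, while if $\bar\Delta(U)<2$ then $(\star)$ gives $\delta(U)<\gamma\bar\Delta(U)<1$, hence $\delta(U)=0$, so no update has hit $U$ since its allocation and $\Delta(U)=\bar\Delta(U)\geq 1$ by the Allocation-Invariant~\eqref{eq:invariant-fixed}. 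With this local repair (applied both in your main display and in the legality argument for $P$) your induction goes through. A minor cosmetic point: a reallocation issued \emph{from} $P$ does not assign $P$ a new budget, so it does not reset $\delta(P)$ or $\bar\Delta(P)$; this is harmless for your argument, since $P$ is not a violator, but the claim that the reallocation ``sets $\delta=0$ and $\bar\Delta=\Delta$ for every allocated interval in it'' should exclude $P$ itself.
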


\begin{proof}
    Recall that $\Delta(U)=|\beta(U)|-|X(U)|$ denotes the current slack and $\bar \Delta(U)=|\beta(U)|-\bar n(U)$ the slack at allocation time.
    Consider an arbitrary sequence of update requests $(x_1,x_2,\ldots)$, starting after some complete rebuild of the entire (initial) allocation at the root of the interval tree.

    The proof is by contradiction.
    If an allocated interval violates after the $t$-th update $x_t$ to the interval tree the condition, we can consider the first time $t^\dagger$ that such a violation occurs.
    That is, after the update $x_{t^\dagger-1}$ concluded all allocated intervals in tree $T_{t^\dagger-1}$ are still non-violating and interval $U$, that was an allocated interval in $T_{t^\dagger-1}$, is now violating the condition in tree $T_{t^\dagger}$.
    Let $0 \leq t^* < t^\dagger$ be the time when concluding update $x_{t^*}$ allocated $\beta(U)$ for $U$.
    The key-count $|X(U)|$ attained through the updates from $(x_{t^*+1},\ldots,x_{t^\dagger})$ is thus $n(U) \geq |\beta(U)|$ in $T_{t^\dagger}$.
    Subtracting $\bar n(U)$ from both sides, this means that the increase 
    $$
            n(U) - \bar n(U) 
        ~\geq~ |\beta(U)|-\bar n(U) 
        ~=~ \bar \Delta(U)
        \quad,
    $$ where $\bar \Delta(U)$ is the slack at allocation time $t^*$,
    and 
    $\delta(U)\geq n(U)-\bar n(U)$ due to $level(x_i)\geq 1$. 
    
    Since any single update $x_i$ within $U$ increases the key-count $n(U)$ by at most $1$, the counter $\delta(U)$ must have been larger than 
    $n(U)-\bar n(U)-1$ in $T_{t^\dagger-1}$, i.e. before request $x_{t^\dagger}$ arrives.
    Thus, on the one hand the allocated interval $U$ has in $T_{t^\dagger-1}$ a counter value 
    $$
    \delta(U) \geq \bar \Delta(U)-1
    ~.
    $$
    But after the last update $t \in (t^*,t^\dagger)$ that did increment $\delta(U)$, we have on the other hand also a counter value 
    $$
    \delta(U) < \gamma \bar \Delta(U)
    ~,
    $$
    as the proactive threshold for a reallocation from $parent(U)$ was not triggered.
    
    We distinguish the two cases $\bar \Delta(U) =1$ and $\bar \Delta(U) \geq 2$ to complete the proof.

    If $\bar \Delta \geq 2$ this yields the contradiction, since 
    $
    \gamma \bar \Delta \leq \bar \Delta -1
    \Longleftrightarrow
    \gamma \leq 1 - 1/ \bar \Delta
    $ due to $\gamma \leq \tfrac12$.

    If $\bar \Delta = 1$, both inequalities $\delta \geq 0$ and $\delta < \gamma$ are only satisfied for $\delta(U)=0$.
    Contradicting that $U$ was allocated at time $t^*<t$.
\end{proof}

It remains to analyze the write-cost of the proposed algorithm for the cases that the smooth-split of the slack is used in the allocation formula~\eqref{eq:separator-location}, or the adaptive-split of the slack (Section~\ref{sec:adaptive}).
Informally, the proposed proactive algorithm aims at rebalancing the sibling's  slack as soon as one child received a number of updates that exceeds a $\gamma$-fraction of its allocated slack, causing its parent's budget being again split roughly equal among those siblings. 

Recall that the root's relative-slack remains in $[\eps/2,2\eps)$ between any two complete rebuilds at the root of the interval tree.
To motivate the following, let us briefly assume that \emph{any} allocated interval $U$ receives at any reallocation a relative-slack 
$\eps_U \in [\eps/2,2\eps)$ that remains within a constant fraction of the entire array.
(This is not necessarily true for every request sequence.)
Under this assumption, the allocated slack 
  $\bar \Delta(U)$ 
is always at least a constant fraction of the weight;
and we have for $\delta(U)$ from \autoref{eq:proactive-triggering} that:
\begin{align}
    \eps/2 \quad \leq\quad &\frac{\bar \Delta(U)}{\bar w(U)} \quad < \quad 2\eps \label{eq:slack-when-dense}\\
    \delta(U) \quad\leq\quad &\bar \Delta(U) \quad\leq\quad \delta(U)/\gamma~.
\end{align}

This will be the main observation for bounding the amortized write-cost $w(U)/\delta(I)$ of a proactive-reallocation in terms of $w(U)/w(I)$, since, assuming \autoref{eq:slack-when-dense}, it holds that 
\begin{align*}
    \frac{w(U)}{\delta(I)}
\leq 
    \frac{w(U)}{\gamma \bar\Delta(I)}
\leq 
    \frac{2 w(U)}{\gamma \eps \bar w(I)}
< 
    \frac{2 w(U)}{\gamma\eps \cdot w(I)/(1+2\eps)}
\quad \leq \quad \frac{4}{\gamma\eps} ~ w(U)/w(I)
~.
\end{align*} 
Note that the second inequality uses the lower bound $\bar\Delta(I)\geq\eps\bar w(I)/2$ and the third inequality uses the upper bound of \autoref{eq:slack-when-dense}  for 
$\bar w(I) \geq w(I)-\delta(I) \geq w(I)-\bar\Delta(I) > w(I)-2\eps \bar w(I)$.

It is clear that after every periodic rebuild at the root, the relative slack of every allocated interval is exactly that of the root, which is, by \autoref{eq:invariant-density}, in $[\eps/2,2\eps)$ throughout any \emph{phase} between two periodic rebuilds.
Let $T_t$ denote the interval tree after the $t$-th update request (of the phase) and smallest relative slack therein is denoted by 
    $
    \eta_t := \min \left\{ \eps(U)= \frac{|\beta(U)|-|X(U)|}{w(U)}~:~U \text{ allocated in $t$-th intervall tree } T_t \right\}
    $.

\subsection{Setting $\gamma=\Theta(1/\log n)$ Prevents Smooth-Allocations from `Drifting'}
\label{sec:smooth-gamma}
With the simplistic, smooth allocation formula discussed so far, it is not true that $\eta_t$ is within a constant factor of the relative slack of the root $\Omega(\eps)$ for all parameter values $\gamma \in (0,\tfrac 1 2 ]$.
Let us briefly discuss the adversary workload from \cite{zhang1993density,DietzZ90-log2N-lower-bound} that indeed causes a `drift to low slack', before we discuss the non-smooth fractional allocation formula from the next section that is designed to avoid drifts in the skip-list based approach also with larger $\gamma$ values.

In the phase, the adversary always inserts a new smallest key to $X$, i.e. $x_1 > x_2 > \ldots$ and so forth.
Though the skip-list is history independent, insertions are always close to key inserted last.
Consider the left spine of ancestor intervals 
$U_r \supsetneq U_{r-1} \supsetneq \ldots~$ 
of the interval tree.
If unlucky, all inserted keys are on, or close to, the lowest level. 
Then \emph{only} proactive-reallocations happen from the intervals of the left spine that are on the top-most levels (, which leave the parent/child relations of intervals unaffected).
Initially, all intervals have equal relative slack 
$\eps_r = \eps_{r-1}=\ldots~$ 
before the first insertion $x_1$ of the phase.
During the phase, the insertions decrease the relative slack of nodes, until the proactive threshold of a node $U_d$ is reached that in turn increases descendants relative slacks $\eps_{d+1},\eps_{d+2},\ldots$ all to the value of $\eps_d$.
The \emph{drift} of the relative-slacks $\{\eps_i\}$, away from $\eps_r$ of the root, can be bounded using the proactive threshold of \autoref{eq:proactive-triggering} as follows.

Consider $U_{r-1}$ below the root and let 
   $\bar \eps_{r-1}:=\bar\Delta(U_{r-1})/\bar w(U_{r-1})$ 
be its initial relative slack.
The smallest possible relative slack $\eps_{r-1}$, before $\delta_{r-1}$ triggers a reallocation from the root $U_r$, is 

\begin{align} 
    \eps_{r-1}
&=
    \frac{|\beta(U_{r-1})| - n(U_{r-1})}{w(U_{r-1})}
\geq 
    \frac{|\beta(U_{r-1})|-\bar n(U_{r-1})-\delta_{r-1}}{w(U_{r-1})}
\nonumber 
\\
&>
    \frac{\bar \eps_{r-1}\bar w(U_{r-1})-\gamma\bar\eps_{r-1}\bar w(U_{r-1})}{w(U_{r-1})}
\quad=\quad 
    \bar \eps_{r-1}(1-\gamma)\frac{\bar w(U_{r-1})}{w(U_{r-1})}
\label{eq:drift-bounded-delta}
\\
w(U_{r-1})
&\leq \bar w(U_{r-1})+\delta_{r-1} < \bar w(U_{r-1}) (1+ \gamma\bar \eps_{r-1})
\label{eq:drift-expected-weight-increase}
\\
\eqref{eq:drift-bounded-delta}\&\eqref{eq:drift-expected-weight-increase}\Longrightarrow\quad 
\eps_{r-1}
&\geq  
\bar \eps_{r-1} \frac{1-\gamma}{1+\gamma\bar\eps_{r-1}}
=
\bar \eps_{r-1} \left(1-\gamma\frac{1+\bar\eps_{r-1}}{1+\gamma\bar\eps_{r-1}} \right)
\geq 
\bar \eps_{r-1} (1-2\gamma)
\quad,
\label{eq:wc-drift}
\end{align}

where we use that
$
    \delta < \gamma\bar\Delta=\gamma \bar \eps \bar w
$
for \eqref{eq:drift-bounded-delta},
that the insertion of %
keys increases the weight by $\leq \delta$ for \eqref{eq:drift-expected-weight-increase},
and the last inequality uses that 
$
 \gamma\frac{1+\bar \eps}{1+\gamma\bar\eps}
<   \gamma(1+\bar\eps)
\leq 2\gamma
$.
That is, any smooth reallocation that descend from interval $U_{r-1}$ will allocate relative slacks of value at least 
$\eps_{r-1} \geq \bar \eps_r(1-2\gamma)$.
Now, the same argument applies to the relative slack $\eps_{r-2}$ of $U_{r-2}$, which shows that the expected relative slack is
    $\eps_{r-2} > \bar \eps_r(1-2\gamma)^2$.
Using Bernoulli's inequality\footnote{For $x\in(0,1]$ and integer $d\geq 1$, 
we have that $(1-x)^d\geq1-xd$ from the geometric series 
$
d\geq\sum_{i=0}^{d-1} (1-x)^i=\frac{1-(1-x)^d}{x}
$.}, 
we have that
intervals at distance $d$ from the root have a bound for their relative slack of the form
$$
    \eps_{r-d} 
\quad > \quad  
    \bar \eps_r ~ (1-2\gamma)^d
\quad \geq \quad 
    \bar \eps_r ~ (1-2\gamma d)
 ~.
$$
Since $d \leq r$ and $r=O(\log n)$ with high probability, one way to ensure that smooth-allocations have an expected drift $\E{\eta_t} = \Omega(\bar \eps_r)$ is due to simply setting the proactive parameter $\gamma = \Theta (1/\log n)$,
which will however only match the known $\Omega(\log^2 n)$ lower bound for smooth, fractional allocation algorithms~\cite{zhang1993density,DietzZ90-log2N-lower-bound}.

{
\color{blue}

}
Note that generating such a critical path for the proactive-reallocation mechanism, whose relative slack drifts far from that of the root, requires \emph{many insertions} that first degrade $U_{r-1}$, then degrade $U_{r-2}$, and so fort.
We use this observation to devise our non-smooth split-mechanism of the budget.

\subsection{Non-Smooth: Countering the Drift with Repeated, Adaptive Splits}
\label{sec:adaptive}

Recall that smooth allocations simply use the current relative slack $\eps_U$ to assign each child $\eps_U w(I_i)$ slack for its budget $\beta(I_i)$.
Thereby, the relative slack $\eps(I_i)$ when a child $I_i$ triggers proactive reallocation at $U$ is maximized, as any non-fair split must assign at least one child of $U$ less than $\eps(U)$ relative slack in order to assign more than $\eps(U)$ to a child.
If the adversary knows which interval received less than a fair share in a Proactive Reallocation, changing the requested insertions to reside in this interval would diminish the relative slack \emph{even faster} than with the smooth-split.
As noted in previous works, it is highly unintuitive how deviating from a smooth algorithm (fair splits) can eventually turn out advantageous for density control.

The following adaptive-split mechanism for a Proactive Rellaocation from $U$ computes for each of its $d$ subtrees a value $\bar \eps(I)$, assigns $\bar \eps(I) \bar w(I)$ slack to the budget $\beta(I)$ of child $I$, and allocates the entire subtree of $I$, using %
above's smooth-allocation.
Since it seems very difficult to compute one such split, that anticipates all future requests of the adversary during the budget-phase, we partition the budget-phase of $U$ into two halves, and each half into $O(\log \log n)$ rounds.
Whenever the $\delta(U)$ falls in the range of the next round, we recompute, and reexecute, the adaptive split of $U$.
We will exploit two observation, on above's analysis of the drift, to derive two cases where the drift of deep nodes is \emph{negligible}, and use those to define the adaptive-split and the rounds of a budget-phase.
Recall that the fan-out $d$ of interval $U$ remains fixed during its budget-phase, i.e. a change due to an update would trigger a reallocation from $parent(U)$.

\begin{lemma}[First\&Last are Drift Safe] \label{lem:adaptive-base-case}
For reallocation from $U$, regardless of it being triggered by a child $I$ or being due to scheduled periodic rebuilding between the budget allocation and renewal of $U$.
If the number $\delta(U)$ or the pending number 
$\delta^\dagger(U):=\gamma\bar\Delta(U)-\delta(U)$
until the budget $\beta(U)$ is renewed is `very small' in relation to its allocated slack, i.e. if 
$\min\{\delta(U),\delta^\dagger(U)\} \leq \bar \Delta(U)/(\kappa \log N)$ 
for a sufficiently large constant $\kappa\gg1$,
then using the fair-split for all such reallocations leads to a negligible drift factor in a weight-balanced tree, regardless of the parameter $\gamma$.
\end{lemma}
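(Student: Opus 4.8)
The key quantity to control is the drift factor $\bar\eps(I)/\bar\eps_r$ accumulated along a root-to-leaf path in the interval tree, where each Proactive Reallocation contributes a multiplicative loss. Recall from~\eqref{eq:wc-drift} that a smooth (fair) split at a node $U$ that triggers reallocation contributes a factor $\geq (1-2\gamma)$, but only because the triggering child $I$ has absorbed $\delta(I) < \gamma\bar\Delta(I)$ updates and the weight of $U$ itself has grown by $\leq\delta(U)\leq\gamma\bar\Delta(U)$. The observation driving the lemma is that this analysis was \emph{wasteful}: it bounded $\delta(U)$ by $\gamma\bar\Delta(U)$, the full budget-phase length, whereas if the reallocation of $U$ happens when $\delta(U)$ is actually \emph{much smaller} than $\gamma\bar\Delta(U)$, the weight of $U$ has barely moved and the fair split loses essentially nothing; symmetrically, if almost the entire budget-phase of $U$ is already spent (so $\delta^\dagger(U)=\gamma\bar\Delta(U)-\delta(U)$ is tiny), there is only a negligible remaining window in which $U$'s own weight can still drift before $U$ is renewed, so again the fair split cannot compound much loss downstream.

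\textbf{Step 1: isolate the per-node loss.} First I would redo the computation of~\eqref{eq:drift-bounded-delta}--\eqref{eq:wc-drift} but keeping $\delta(U)$ explicit rather than replacing it by its worst-case value $\gamma\bar\Delta(U)$. This gives that a fair-split reallocation of $U$ degrades the relative slack of a child by a factor of the form $\frac{1-\delta(I)/\bar\Delta(I)}{1+\delta(U)/\bar w(U)} \geq 1 - O\!\bigl(\delta(U)/\bar w(U) + \delta(I)/\bar\Delta(I)\bigr)$. In the weight-balanced regime (as in~\eqref{eq:slack-when-dense}, $\bar\Delta(U)=\Theta(\eps\bar w(U))$), the term $\delta(U)/\bar w(U)$ is $\Theta(\eps^{-1}\cdot\delta(U)/\bar\Delta(U))$. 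So the hypothesis $\delta(U)\leq \bar\Delta(U)/(\kappa\log N)$ makes this per-node loss at most $O(\eps^{-1}/(\kappa\log N))$, hence negligible over a path of length $O(\log N)$.

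\textbf{Step 2: the $\delta^\dagger$ case and the renewal accounting.} For the symmetric case, I would argue that once $\delta^\dagger(U)\leq\bar\Delta(U)/(\kappa\log N)$, the node $U$ will be renewed (via a reallocation from $\mathrm{parent}(U)$, which necessarily rebuilds $U$'s subtree under a fresh fair split inherited from the parent) after at most that many further updates land in $U$; hence between this reallocation of $U$ and the renewal of $\beta(U)$, the weight $w(U)$ can increase by at most $\delta^\dagger(U)$, so the slack that a descendant of $U$ can lose \emph{due to $U$'s growth} before everything is reset is again an $O(1/(\kappa\log N))$ fraction. The care needed here is to make precise that "drift factor" is measured over the lifetime of a budget, so that contributions from a node whose budget is about to be renewed are correctly charged to the (short) remaining lifetime and not to the full phase.

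\textbf{Step 3: assemble.} Combining, every reallocation covered by the lemma's hypothesis (triggered or periodic) contributes a factor $1-O(1/(\kappa\log N))$, independent of $\gamma$; taking $\kappa$ a large enough constant and multiplying over the $O(\log N)$ levels of a whp-bounded skip-list (using $r=O(\log n)$ whp and $n\leq N$) gives a total drift factor bounded below by a constant, i.e. $\eta_t=\Omega(\bar\eps_r)=\Omega(\eps)$ restricted to such reallocations — this is the "negligible drift factor in a weight-balanced tree" claimed.

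\textbf{Main obstacle.} The hard part is the bookkeeping in Step 2: the lemma lumps together reallocations \emph{triggered by a child} and reallocations \emph{due to scheduled periodic rebuilding} and asks for a single statement about "the budget allocation and renewal of $U$", so I must set up the right invariant that tracks, for each allocated interval, the weight at the last reallocation versus the weight at budget renewal, and verify that the $\delta^\dagger$ quantity correctly upper-bounds the residual weight growth in \emph{every} such case — including the subtlety that a child-triggered reallocation of $U$ changes $\delta(U)$'s reference point but not $\bar\Delta(U)$, whereas a parent-triggered renewal resets everything. Getting the definition of $\delta^\dagger(U)=\gamma\bar\Delta(U)-\delta(U)$ to be exactly the quantity that bounds "pending updates until renewal" in all these cases is where the real content lies; the inequalities themselves are routine once that is pinned down.
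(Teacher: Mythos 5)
Your proposal is correct and follows essentially the same route as the paper's proof: keep the update count explicit in the per-node drift inequality, use the hypothesis $\min\{\delta(U),\delta^\dagger(U)\}\leq \bar\Delta(U)/(\kappa\log N)$ to make each fair-split loss a factor $1-O(1/(\kappa\log N))$, and compound over the $O(\log N)$ levels to retain a constant fraction of the slack. The only difference is presentational: the paper handles the first-round and last-round cases through a single inequality and makes the weight-balance dependence explicit via $\mathbb{E}[\hat w_U/\bar w_I]=O(1)$ (Lemma~\ref{lem:parent-child-balance}), where you invoke the weight-balanced regime informally.
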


We thus use the fair-split for the first, and the last, round as base cases of our adaptive-split.

\begin{proof}
Let $\hat w_U$ be the subtree weight of $U$ when its current slack 
$\Delta(U)=\hat \eps_U \hat w_U$ 
is partitioned with the fair-split ($\bar \eps_I=\hat \eps_U$).
The relative-slack of child $I$ is at least

\begin{align}
    \eps_I
&=
    \frac{(\bar n(I) + \bar\eps_I\bar w_I)-n(I)}{w_I}
\geq 
    \frac{\bar \eps_I \bar w_I -\delta_I}{w_I}
\geq 
    \frac{\bar \eps_I \bar w_I -\delta_I}{\bar w_I+ \delta_I}
\\
&=
    \bar \eps_I\frac{\bar w_I -\delta_I/\bar\eps_I}{\bar w_I + \delta_I}
=
    \bar \eps_I\left( 1-\frac{\delta_I +\delta_I/\bar\eps_I}{\bar w_I+\delta_I}\right)
\geq \quad
    \bar \eps_I\left( 1-\frac{2\delta_I/\bar\eps_I}{\bar w_I}\right)
\quad
=
    \bar \eps_I\left( 1-\frac{2\delta_I}{\bar\eps_I \hat w_U}\cdot\frac{\hat w_U}{\bar w_I}\right)
\nonumber\\
&\geq
    \bar \eps_I\left( 1-\frac{2}{1-2\gamma}\cdot\frac{\hat w_U/\bar w_I}{\kappa \log N}\right)~,
    \label{eq:adaptive-base-cases}
\end{align}

where the inequality of \eqref{eq:adaptive-base-cases} uses that $\delta_I$ is upper bounded by the number of updates in $U$ that are possible in the first round or in the last round, both of which are $\gamma\bar \Delta_U/\kappa \log N$.
For the first round, we use that 
$\bar \eps_I = \bar \eps_U$ and $\hat w_U=\bar w_U$, thus the denominator is at least $\bar \Delta _U$.
For the last round, we use that $\bar \eps_I \geq \bar \eps_U (1-2\gamma)$ and 
$\hat w_U \geq \bar w_U$.
Thus, $\E{\hat w_U /\bar w_I}=O(1)$ implies $\E{\eps_I}\geq \bar \eps_I \left(1-\frac{O(1)}{\kappa \log N} \right)$.

Since $r=O(\log N)$ whp., $r$ such factors can reduce the lower bound for the relative-slack merely by a factor 
$
    \left(1-\frac{O(1)}{\kappa \log N}\right)^r 
\geq
    1-\frac{O(r)}{\kappa \log N} 
\geq 
    \tfrac 1 2
$ for a sufficiently large constant $\kappa > 1$.
\end{proof}

For the adaptive-split, we observe that every single update that increments $\delta(U)$ does not just decrease the slack $\Delta(U)$ by one, it also slightly reduces the uncertainty of the weights that the children of $U$ have in the near future.
To exploit this observation, we partition the phase's weight updates $[1,\delta^\dagger(U)]$ into rounds of geometrically decreasing widths around the center $\delta^\dagger(U)/2$, leading to $O(\log \log N)$ rounds for the first half of the range and $O(\log \log N)$ rounds for the latter half of the range, either having a width of $\delta^\dagger(U)/2^i$ for some integer $i=O(\log \log n)$.
Note that the number of children $d$ remains fixed during the budget-phase of  interval $U$.
Whenever the number $\delta(U)$ falls in the next round we repeat the adaptive split, instead of performing one split at the end once $\delta(U)$ reaches $\delta^\dagger(U)$ as the smooth Proactive Reallocations would do.
To effectively compensate the drift during one round, consider the partition of the allocated slack 
$\bar \Delta(U)=\sum_j B_j$ proportional to the widths $\{\delta^\dagger_j\}$ of the rounds, i.e. the rounds with index $j$ and $j'$ have $B_j/\delta^\dagger_j = B_{j'}/\delta^\dagger_{j'}\geq 1$.

Intuitively, our adaptive split uses the slack $B_j$ as `bonus' for the current round $j$ and splits the remaining slack $(\Delta(U)-1)-B_j$ with the fair split according to the current weights.
Formally, each of the $d$ children $\{I_i\}$ of $U$ receives in any reallocation during round $j$ the budget
\begin{align}\label{eq:adaptive-split-slack-new}
    &\text{``Adaptive-Split'':}&~ 
|\beta(I_i)| = n(I_i) + \frac{(\Delta(U)-B_j)-1}{w(U)-1} w(I_i)+\frac{B_j}{d}~.
\end{align}
During round $j$, we repeat this allocation whenever $\delta(U)$ increases by $\delta^\dagger_j/2d$.
This completes the description of our non-smooth allocation mechanism.

Note that the last step ensures that any child has $\delta(I_i)\leq B_j/2d$ throughout round $j$.

\begin{lemma}[Rounds are Drift Safe] \label{lem:adaptive-any-round-is-safe}
    Let round $j$ be one of the $O(\log \log N)$ rounds in the budget-phase of $U$ that is not the first round, and not the last round. 
    Then child $I$ of $U$ has a relative-slack 
    $\eps(I)$ that remains at least as large as the relative slack of $U$ at the end of round $j$, excluding $B_j$ slots.
\end{lemma}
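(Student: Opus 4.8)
The plan is to mimic the drift computation of \eqref{eq:adaptive-base-cases}, but to crucially exploit the extra ``bonus'' slot-reservoir $B_j/d$ that the adaptive split of \eqref{eq:adaptive-split-slack-new} gives to each child on top of its fair share. First I would fix a child $I=I_i$ of $U$ and consider the situation at the end of round $j$, writing $n(I),w(I)$ for its current key-count and weight and $\bar n(I),\bar w(I)$ for the values at the last (re)allocation \emph{within round $j$}. By the final sentence of Section~\ref{sec:adaptive}, the reallocation granularity $\delta^\dagger_j/2d$ guarantees $\delta(I)\le B_j/2d$ throughout the round; this is the key quantitative input. The budget of $I$ after that last allocation is $|\beta(I)| = \bar n(I) + \bar\eps_I\,\bar w(I) + B_j/d$ where $\bar\eps_I := (\Delta(U)-B_j-1)/(w(U)-1)$ is the fair relative-slack of the ``non-bonus'' part, and $\bar\eps_I$ is within a constant of $\bar\eps_U$ by the same reasoning as in Lemma~\ref{lem:adaptive-base-case} (since $j$ is neither first nor last, only $O(\log\log N)$ intermediate fair-type splits have occurred, contributing a $1-O(\log\log N/\kappa\log N)$ factor).

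Next I would lower-bound the current relative slack:
\begin{align*}
\eps(I)
= \frac{|\beta(I)|-n(I)}{w(I)}
\;\ge\; \frac{\bar\eps_I\bar w(I) + B_j/d - \delta(I)}{\bar w(I)+\delta(I)}
\;\ge\; \frac{\bar\eps_I\bar w(I) + B_j/2d}{\bar w(I)+\delta(I)},
\end{align*}
using $\delta(I)\le B_j/2d$ in the numerator. The claim to be proven—that $\eps(I)$ is at least the relative slack of $U$ at the end of round $j$ \emph{excluding} $B_j$ slots, i.e. at least $(\Delta(U)-B_j)/w(U)$, or equivalently at least $\bar\eps_I$ up to the weight change of $U$—then reduces to showing that the $B_j/2d$ surplus in the numerator compensates the $\delta(I)$ growth in the denominator. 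Concretely it suffices that $B_j/2d \ge \bar\eps_I\,\delta(I)$, and since $\delta(I)\le B_j/2d$ this follows as soon as $\bar\eps_I \le 1$, which always holds (relative slack of an allocated interval is at most $1$ in the dense regime, by \eqref{eq:invariant-density} and the fact that $\Delta\le w$). So the bonus exactly cancels the round's drift, and the only residual loss is the $O(\log\log N / \kappa\log N)$ factor already absorbed in $\bar\eps_I \ge \bar\eps_U(1-o(1))$.

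I expect the main obstacle to be bookkeeping the precise meaning of ``excluding $B_j$ slots'' in the statement and making sure the inequality is stated against the right reference quantity—namely, that after subtracting the $B_j$ reservoir the remaining budget of $U$ behaves like a fair-split interval of relative slack $\bar\eps_I$, so that the per-round argument composes cleanly with the whp.\ bound $r=O(\log N)$ on depth the way the base-case lemma does. A secondary subtlety is that weights $\bar w(I), w(I)$ are random and one only controls $\E{w(I)}\le \ell+2|X(I)|$; as in Lemma~\ref{lem:adaptive-base-case} I would phrase the conclusion in expectation, using $\E{\hat w_U/\bar w_I}=O(1)$ in a weight-balanced tree, and defer the actual balance estimate to the later sections that establish it for the interval tree. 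The rest is the routine algebra sketched above.
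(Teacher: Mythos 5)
Your proposal is correct and follows essentially the same route as the paper's proof: you use the adaptive-split budget $\bar\eps_I\bar w_I + B_j/d$, the within-round reallocation granularity giving $\delta(I)\le B_j/2d$, and the observation that the remaining $B_j/2d$ surplus offsets the denominator growth (your condition $B_j/2d\ge\bar\eps_I\,\delta(I)$ with $\bar\eps_I\le 1$ is exactly the paper's $\tfrac{a+x}{b+x}\ge\tfrac{a}{b}$ step), concluding $\eps(I)\ge\bar\eps_I$, i.e.\ at least the relative slack of $U$ excluding $B_j$. The additional digression about comparing $\bar\eps_I$ to $\bar\eps_U$ via Lemma~\ref{lem:adaptive-base-case} and taking expectations of weight ratios is not needed for this lemma, but it does not harm the argument.
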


\begin{proof}
    It suffices to consider insertions, as deletions increase slack.
    Let $
    \hat \eps
    $ be the coefficient of the middle term in~\autoref{eq:adaptive-split-slack-new} when the budget of child $I$ was computed last.
    Thus $\hat \eps$ is at least as large as the relative slack of $U$ at the end of round $j$, excluding $B_j$ slots.
    The relative-slack of child $I$ is throughout round $j$ at least
\begin{align}
    \eps_I
&\geq 
    \frac{ \hat \eps \bar w_I +\frac{B_j}{d} -\delta(I)}{w_I}
\geq 
    \frac{ \hat \eps \bar w_I +\frac{B_j}{2d} }{w_I}
\geq 
    \frac{ \hat \eps \bar w_I +\delta_I}{w_I}
    \geq \frac{ \hat \eps \bar w_I + \delta_I}{\bar w_I+\delta_I}
            \geq \frac{ \hat \eps \bar w_I}{\bar w_I} = \hat \eps
            \quad,
\end{align}
where the last inequality uses that $\frac{a+x}{b+x} \geq \frac{a}{b}$ iff. $ b \geq a$.
\end{proof}

\section{Analysis of the Run-Length and the Weight-Balance}\label{sec:analysis}
We show the following theorem in this section.

\begin{theorem} \label{thm:main-result}
    Let $\eps \in (0,1]$ be constant.
    There is a randomized algorithm for the Online List Labeling problem whose amortized number of write operations per update is in expectation at most 
    $O(\eps^{-1} \log (n) \log \log (n))$.
    This bound holds against worst-case insertion and deletion choices of an oblivious adversary.
    That is an adversary that has no knowledge of the outcome of the random coin tosses that determine the underlying skip-list of our algorithm.
\end{theorem}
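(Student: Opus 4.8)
The plan is to prove \thmref{main-result} by feeding three quantitative facts into the accounting of the algorithm: (a) the skip-list interval tree is \emph{weight-balanced}, in the sense that the parent-to-child weight ratio $w(parent(I))/w(I)$ and the fan-out $d_U$ of an interval are each $O(1)$ in expectation (this is \lemref{parent} and the companion estimates of this section); (b) the non-smooth allocation of \secref{adaptive} is \emph{drift-safe}, i.e.\ by \lemref{adaptive-base-case} and \lemref{adaptive-any-round-is-safe} every allocated interval keeps relative slack $\Omega(\eps)$ with high probability over the skip-list, \emph{even for constant} $\gamma$; and (c) a \emph{run-length} bound: the lowest allocated interval containing a given key has expected weight $O(\eps^{-1})$. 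We fix $\gamma\in(0,\tfrac12]$ to be a small constant, so \lemref{no-overflow} makes the algorithm well-defined and overflow-free. By the periodic-rebuilding reduction of \secref{algo}, it suffices to bound the expected number of writes within a single phase between two complete rebuilds, the reduction contributing only $O(\eps^{-1})$ amortized writes per update; inside a phase the density invariant \eqref{eq:invariant-density} pins the root's relative slack to $[\eps/2,2\eps)$, and by drift-safety this propagates to all allocated intervals. All expectations below are over the skip-list coins, which are independent of the oblivious adversary, and the event that the skip-list has height $\omega(\log n)$ is absorbed into the bound via the trivial $O(N)$-per-update estimate.

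\textbf{Charging scheme.} Every write within a phase belongs to exactly one reallocation of the subtree of some interval $V$, and costs $O(|X(V)|)=O(w(V))$. These reallocations are of four kinds: (i) the at most one complete rebuild at the root; (ii) the leaf reallocation done on every update, which reallocates the subtree of $parent(I)$ for the lowest allocated interval $I$ containing the updated key; (iii) reallocations of $parent(I)$'s subtree triggered when a child's counter hits $\delta(I)\ge\gamma\bar\Delta(I)$; and (iv) the scheduled adaptive re-splits of an interval $U$, of which there are $O(d_U\log\log n)$ in any budget-phase of $U$. Kind (i) is paid for by the rebuilding reduction. For kind (ii): the Termination Criterion makes $U$ a leaf of the allocated tree only when some child $I_i$ would get slack $<1$, so $w(I_i)=O(1/\bar\eps_U)=O(1/\eps)$ by drift-safety; since a level-$\ell$ interval has expected weight $\Theta(2^\ell)$, such a leaf sits at level $O(\log\tfrac1\eps)$ and has expected weight $O(1/\eps)$, hence so does its parent by the weight-balance bound, and kind (ii) costs $O(\eps^{-1})$ amortized per update.

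\textbf{The main terms (iii) and (iv).} A kind-(iii) reallocation of $U=parent(I)$'s subtree costs $O(w(U))$ and is paid for by the $\ge\gamma\bar\Delta(I)$ units that $\delta(I)$ accumulated since $I$'s budget was last set; since $\bar w(I)=\Theta(w(I))$ within a budget-phase and $\bar\Delta(I)=\Omega(\eps\, w(I))$ w.h.p.\ by drift-safety, the amortized cost per unit of $\delta(I)$ is $O\bigl(w(U)/(\gamma\bar\Delta(I))\bigr)=O\bigl(\eps^{-1}\cdot w(parent(I))/w(I)\bigr)$. The $O(d_U\log\log n)$ kind-(iv) re-splits in one budget-phase of $U$ cost $O(d_U\log\log n\cdot w(U))$ in total and are paid for by the (at most $\gamma\bar\Delta(U)$) units $\delta(U)$ accumulates in that phase, giving amortized cost $O\bigl(d_U\log\log n\cdot w(U)/(\gamma\bar\Delta(U))\bigr)=O(d_U\log\log n/\eps)$ per unit of $\delta(U)$ (a phase cut short by a reallocation of $parent(U)$ only lowers this rate, apart from intervals so light that a budget-phase cannot be split into $\Theta(\log\log n)$ rounds, which are handled by a separate elementary estimate contributing $O(\eps^{-1}\log n)$ amortized per update). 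Now an update of a key $x$ on skip-list level $\ell$ raises $\delta(U)$ by $\ell$ for every allocated ancestor $U$ of $x$ with level $>\ell$; there are $O(\log n)$ such ancestors, $\mathbb{E}[\ell]=O(1)$, and along the ancestor chain $\sum_I w(parent(I))/w(I)=O(\log n)$ and $\sum_U d_U=O(\log n)$ in expectation, by linearity from the per-interval $O(1)$ weight-balance and $O(1)$ fan-out bounds. Summing the per-unit costs over all updates and all their allocated ancestors yields expected total cost $O(\eps^{-1}\log n)$ from kind (iii) and $O(\eps^{-1}\log n\log\log n)$ from kind (iv); combined with kinds (i) and (ii) this is $O(\eps^{-1}\log n\log\log n)$ amortized writes per update.

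\textbf{Where the difficulty lies.} The genuinely hard part is supplying ingredients (a) and (b). For (a) one must show that parent/child weight ratios and fan-outs are $O(1)$ in expectation even though the skip-list is grown incrementally and the intervals along a root-to-leaf path are heavily dependent; I expect this to rest on the memorylessness of the geometric level distribution together with a careful conditioning argument. For (b) one must establish, against an adversary free to steer every update into whichever child of $U$ he likes, that the $\Theta(\log\log n)$ geometrically-scheduled re-splits with their bonuses $B_j$ provably absorb the accumulated drift so that relative slack stays $\Omega(\eps)$ at \emph{all} levels at once; this is the crux of \lemref{adaptive-base-case} and \lemref{adaptive-any-round-is-safe} and the reason the final bound carries an extra $\log\log n$ rather than the $\log n$ a smooth split would force. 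Everything else is bookkeeping: that each reallocation is charged to a unique triggering interval's weight-units, that prematurely ended budget-phases and very light intervals are harmless, that the $w(parent(I))/w(I)$ bounds may be taken with whichever weights (allocation-time or current) actually occur since they differ by an additive $\delta\le\bar\Delta$, and that $\eps$ being constant is used exactly where claimed — in the $\bar\Delta=\Omega(\eps\, w)$ estimate and the $O(\eps^{-1})$ rebuild cost.
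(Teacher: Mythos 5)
Your overall route is the paper's own: reduce to a phase via periodic rebuilding, fix $\gamma$ constant and rely on the drift-safety lemmas (\lemref{adaptive-base-case}, \lemref{adaptive-any-round-is-safe}) to keep every allocated interval's relative slack at $\Omega(\eps)$, bound each proactive reallocation by $O(\eps^{-1}\, w(U)/w(I))$ per charged weight-unit, and convert this into $O(\eps^{-1}\log n \log\log n)$ using the $O(1)$ expected parent--child weight ratio (which is \lemref{parent-child-balance}, not \lemref{parent} as you cite for it) together with the $O(\log n)$ allocated ancestors and the $O(\log\log n)$ re-splits per budget-phase. Your split of the proactive cost into counter-triggered reallocations and scheduled adaptive re-splits is just a more explicit bookkeeping of what the paper compresses into ``each interval charges the update $O(\log\log n)$ times,'' and is fine.

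The one step that does not hold as written is your kind-(ii) claim that the per-update reallocation costs $O(\eps^{-1})$. First, the inference ``a level-$\ell$ interval has expected weight $\Theta(2^\ell)$, hence a weight-$O(1/\eps)$ leaf sits at level $O(\log\tfrac1\eps)$'' is a non sequitur: expected weight does not lower-bound actual weight (two adjacent high-level keys give a light interval at a high level); the correct, deterministic statement is $w(I)\geq \ell$, so light intervals sit at level $O(1/\eps)$. Second, and more importantly, you only account for updates landing strictly inside a light allocated leaf. When the inserted or deleted key itself has level $\ell$, the interval tree changes up to level $\ell+1$ (the key becomes, or ceases to be, a separator there), so the reallocation must be issued from the level-$(\ell+1)$ ancestor, whose expected subtree weight is $\Theta(2^\ell)$; averaging over the geometric level distribution gives expected cost $\Theta(\log n)$ per update, not $O(\eps^{-1})$. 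This is exactly the computation in \lemref{parent}, whose conclusion is $O(\eta^{-1}\log n)$, and these writes are covered by none of your kinds (i), (iii), (iv). The lapse is harmless for the theorem, since $O(\eps^{-1}\log n)$ is dominated by the claimed $O(\eps^{-1}\log n\log\log n)$: replace your kind-(ii) estimate by an invocation of \lemref{parent} (with $\eta=\Omega(\eps)$ supplied by drift-safety), and the rest of your accounting goes through essentially as in the paper.
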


There are two kinds of reallocations in the array.
\begin{enumerate}
    \item Parent-Reallocations are those caused by insertions and deletions that change the interval tree.
    \item Ancestor-Reallocations are those caused by our proactive~strategy.
\end{enumerate}

Recall that a standard skip-list on $n$ keys %
has with high probability $r=O(\log n)$ levels and each interval has $O(1)$ children in expectation (conflict-list size), regardless of its level (e.g.~\cite[Chapter~1.4]{mul-book}). %
Moreover, the expected weight of a subtree modified by an insertion (or a deletion) is $O(\log n)$, since the affected key of the update has level $\ell$ with probability $1/2^\ell$, %
the expected number of keys to the left, or the right, that have level less than $\ell$ is $O(2^\ell)$, and there are with high probability $O(\log n)$ levels. (See~\cite[p23]{mul-book} for example.)

Since the adversary is oblivious, we may always assume that when an update to a key is performed, the level of the update is distributed geometrically, even for adversarial updates, as the adversary is not aware of the coin flips of our algorithm.

\begin{lemma}[Subtree Weight]\label{lem:parent}
    Let $\eta=\min_t \eta_t$ the smallest relative slack encountered thus far.
    The expected write-cost of a Parent-Reallocation is $O(\eta^{-1} \log n)$.
\end{lemma}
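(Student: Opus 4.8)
The plan is to bound the write-cost of a Parent-Reallocation by the expected weight of the subtree that gets rewritten, and then argue that this weight is $O(\eta^{-1}\log n)$ in expectation. Recall that a Parent-Reallocation triggered by the insertion or deletion of a key $x$ on level $\ell$ rewrites the subtree rooted at $U = parent(I)$, where $I$ is the lowest allocated interval containing $x$. The number of writes is exactly $|X(U)| < w(U)$, so it suffices to bound $\E{w(U)}$.

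First I would observe that $I$ being the \emph{lowest allocated} interval containing $x$ means that the children of $I$ (if $I$ is not a leaf) are all non-allocated, i.e.\ all descendants of $I$ below one more level form a single consecutive run inside $\beta(I)$. By the Allocation-Invariant~\eqref{eq:invariant-fixed}, $I$ has slack $\bar\Delta(I)\geq 1$ at allocation time, and by definition of $\eta$ its relative slack when allocated was at least $\eta$, so $\bar w(I) \leq \bar\Delta(I)/\eta$; moreover since the proactive trigger has not fired at $I$ (else $parent(I)=U$ would have been reallocated and $I$ re-allocated), we have $\delta(I) < \gamma\bar\Delta(I) \leq \bar\Delta(I)$, hence the current weight satisfies $w(I) \leq \bar w(I) + \delta(I) = O(\bar w(I))$. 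The real content is to bound $w(U)$ in terms of $w(I)$: since $U$ is the parent of $I$, and the skip-list gives each interval $O(1)$ children in expectation at every level (the conflict-list bound cited before the lemma), we expect $w(U) = O(w(I))$ — more precisely, $w(U) = 1 + \sum_j w(I_j)$ over the $O(1)$ expected siblings $I_j$ of $I$, each of which is, like $I$, an allocated interval whose current weight is $O(\bar w(I_j)) = O(\bar\Delta(I_j)/\eta)$.

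Now I need to turn $\bar\Delta(I_j)$ into something like $\log n$. Here the key point is that $I$ is the \emph{lowest} allocated interval containing $x$: its own slack is small, $\bar\Delta(I) = O(\bar w(I)) = O(\eta^{-1}\cdot(\text{something}))$ — wait, that is circular, so instead I would bound the subtree weight $w(I)$ \emph{directly} via the skip-list geometry rather than through slack. The expected weight of the subtree that is modified by an update to a key of geometric level $\ell$ is $O(\log n)$: the affected key has level $\ell$ with probability $2^{-\ell}$, the expected number of same-or-lower-level keys on each side before hitting a higher key is $O(2^\ell)$, and there are $O(\log n)$ levels whp — this is exactly the estimate quoted in the paragraph preceding the lemma. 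The lowest allocated interval $I$ containing $x$ sits somewhere on the path from $x$'s leaf up to the root; its parent $U$ has subtree weight at most a constant factor (in expectation, by the conflict-list bound) times that of $I$, and $w(I)$ is at most the weight of the entire subtree hanging below $U$, which is governed by the same $O(\log n)$ estimate. The $\eta^{-1}$ enters because, to even \emph{reach} the lowest allocated interval, the subtree below $I$ forms a run whose size can be as large as $|\beta(I)| = |X(I)| + \bar\Delta(I)$, and $\bar\Delta(I)$ relative to $w(I)$ is at least $\eta$ but could be $\Theta(\eta^{-1})$ times $w(I)$ — so the number of keys actually rewritten in $\beta(I)$ is $|X(I)| \le w(I) = O(\eta^{-1}\cdot w(I'))$ where $w(I')$ is the weight counted only by interval-tree nodes. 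Combining, $\E{\text{writes}} = \E{|X(U)|} \le \E{w(U)} = O(\eta^{-1}\log n)$.

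The main obstacle — and the step I would spend the most care on — is making the relation between $w(U)$ and the $O(\log n)$ skip-list estimate precise while correctly accounting for the $\eta^{-1}$ factor: one must be careful that "lowest allocated interval $I$" can be much deeper than $U$ in the weighted sense, that its budget $\beta(I)$ may hold far more than $w(I)$ slots, and that the number of \emph{keys} rewritten (not slots) is what counts. I would formalize this by writing $|X(U)| \le |X(\text{subtree below } parent(I))|$, bounding the latter by $O(1)\cdot|X(I)|$ in expectation via the conflict-list/child-count bound, then bounding $|X(I)|$ by $|\beta(I)| = |X(I)| + \bar\Delta(I)$ and using $\bar\Delta(I) \le \bar w(I)/\eta$ together with the $O(\log n)$ bound on $\E{\bar w(I)}$ coming from the level of the modified key. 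Care is also needed because $I$ depends on the random skip-list, so the expectation over coin tosses must be taken after conditioning on the adversary's (oblivious) choice of which key to update, as the remark preceding the lemma licenses.
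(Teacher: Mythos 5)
Your first half is sound and matches the paper: for the allocated part of the rewritten subtree you condition on the level $\ell$ of the updated key, use $\E{w(U)\mid \text{level }\ell}=O(2^{\ell})$, $\Pr[\text{level }\ell]=O(2^{-\ell})$, and $r=O(\log n)$ levels to get $O(\log n)$. The gap is in how $\eta^{-1}$ enters. Writes count \emph{keys}, not slots, so a large budget or slack cannot inflate the cost, and your inequality $\bar\Delta(I)\le \bar w(I)/\eta$ is unjustified in any case: $\eta$ is a \emph{lower} bound on relative slack, giving $\bar\Delta(I)\ge \eta\,\bar w(I)$, not an upper bound. The step where you bound $|X(I)|$ by $|\beta(I)|$ and then try to convert slack into extra weight is exactly the circularity you flagged yourself, and it never closes; likewise, the parent--child balance $\E{w(U)/w(I)}=O(1)$ (the paper's Lemma~\ref{lem:parent-child-balance}) is what the paper uses for Ancestor-Reallocations and does not produce an $\eta^{-1}$ here.

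The missing idea is probabilistic and concerns \emph{where a consecutive run can live}. The cost of a Parent-Reallocation splits into the allocated subtree (your $O(\log n)$ part) and the length of the consecutive run containing the updated key. A run sits at an allocated interval $U$ of level $\ell+1$ only because the top-down allocation terminated there, i.e.\ some child $I$ of level $\ell$ had $\Delta(I)<1$ by \eqref{eq:tdalloc-termination}; since the smooth split gives $\Delta(I)\ge\eps(U)\,w(I)\ge\eta_t\,w(I)$, this forces $w(I)<1/\eta_t$ and hence $|X(I)|<1/\eta_t$. The paper then bounds the probability that such a light interval has a boundary key of level $\ge\ell$ --- one key of level $\ge\ell$ among fewer than $1/\eta$ independent geometric levels --- by $O(\eta^{-1}2^{-\ell})$, and pairs this with $\E{w(U)\mid \text{level }\ell+1}=O(2^{\ell})$ to get an expected run length of $O(\eta^{-1})$ per level, hence $O(\eta^{-1}\log n)$ after summing over the $O(\log n)$ levels. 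Without this argument (or a substitute for it) your proposal has no valid route from the definition of $\eta$ to the claimed bound.
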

\begin{proof} %
    Recall that an allocated interval is one that is assigned a budget by the top-down algorithm, and a non-allocated interval is not, or no longer, assigned a budget on its own as its keys are stored in some consecutive run.
    Since any interval is either allocated or non-allocated, the expected write-cost is bounded by the sum of both expected weights.
    
    For allocated intervals, 
    let $\ell$ be the level number of the key insertion/deletion, and $\ell+1$ the level of the parent interval $U$ from where the allocation is issued.
    We have that the expected parent weight at level $\ell+1$ is at most
    $\E{~w(U)~|~\text{update has level }\ell ~} = O(2^\ell)$.
    Thus, the expected cost of writing the entire subtree of the parent $U$ is at most
    \[
        \sum_{\ell \leq r} \E{~w(U)~|~\text{update has level }\ell ~} \Pr[\text{ update has level }\ell~] 
    =   \sum_{\ell \leq r}   O(2^{\ell}) O(1/2^{\ell})
    =   O(\log n)~,
    \]
    since $r=O(\log n)$ is with high probability. 
    Since the subtree weight of allocated intervals is less than that of all intervals containing the update, the expected write-cost for allocated intervals is $O(\log n)$.

    For non-allocated intervals, the write-cost is the length of the consecutive run that contains the affected key.
    To see that the expected number of keys in a consecutive run is at most 
    $O(\eta^{-1}\log n)$, we consider the case that the top-down reallocation stops on some level $\ell+1$ interval $U$, as one of its child interval $I$ on level $\ell$ satisfies the termination criteria of~\eqref{eq:tdalloc-termination}, i.e. $\Delta(I) < 1$, where the designated slack $\Delta(I)\geq \eps(U) w(I)$ in smooth allocations.
    Thus, at allocation time $t$ of $U$ we have that unallocated child $I$ has 
    weight $w(I) \leq 1/\eps(U) \leq 1/\eta_t$.
    Recall that definition of $w(I)=\ell+\sum_{x \in X(U)} level(x)$, where $level(x)\geq 1$.
    Consequently, interval $I$ contains less than $|X(I)|<1/\eta_t$ keys.
    
    There are three possible cases for the light interval $I \subsetneq U$.
    It may have 
    form $(x,x')$ for two keys from $X$ both of which with level at least $\ell$ and one with level exactly $\ell$,
    form $(-\infty,x)$ for one key $x$ that has level $\ell$, or 
    form $(x,+\infty)$ for one key $x$ that has level $\ell$.

    Since the probability to observe in $<1/\eta$ independent trials of keys that one key has level $\geq\ell$ is at most $O(\eta^{-1}/2^{\ell})$, and the expected weight of a level $\ell+1$ interval is 
    $O(2^{\ell})$, we have for level $\ell$ that the expected length of a consecutive run is $O(\eta^{-1})$.
    Thus, the expected length of a consecutive run is at most
    \[
        \sum_{\ell \leq r} 
        \E{~w(U)~|~ I\text{ has level }\ell ~}\Pr[~I\text{ has level }\ell~] ~\leq~ \sum_{\ell \leq r} O(2^{\ell}) O(\eta^{-1}/2^{\ell}) =   O(\eta^{-1}\log n)~,
    \]
    since $r=O(\log n)$ with high probability.
\end{proof}

For the bound of the write-cost of Ancestor-Reallocation, it remains from Section~\ref{sec:smooth-gamma} to finally bound the expected weight-balance.
Next, we show that the expected parent-child weight-balance is constant, regardless of the child's level number. 

\begin{lemma}[Weight-Balance] \label{lem:parent-child-balance}
Let $I$ be an interval and $U$ its parent interval in the interval tree.
The expected weight ratio between $U$ and $I$ is constant. 
That is $\E{w(U)/w(I)}=O(1)$.
\end{lemma}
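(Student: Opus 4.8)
The plan is to fix the level $\ell$ of $I$ (so $U$ lies on level $\ell+1$), expand $w(U)$ via the subtree-weight recursion in terms of $w(I)$ and the weights of the siblings of $I$ inside $U$, and reduce the claim to two estimates: the expected total sibling weight is $\Theta(2^{\ell})$ and is \emph{independent} of $w(I)$, while $\E{1/w(I)} = O(2^{-\ell})$.

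First I would write $U$ as the parent of children $I_1 < \dots < I_d$ with $I = I_i$; the recursion $w(U) = 1 + \sum_{j=1}^d w(I_j)$ gives
\[
  \frac{w(U)}{w(I)} \;=\; 1 + \frac{1 + \sum_{j \ne i} w(I_j)}{w(I)}\,.
\]
Next I would split the siblings into those left and right of $I$ and treat the right ones (left is symmetric). Writing $q$ for the right endpoint of $I$, there are right siblings only when $q$ is not promoted to level $\ell+1$ (probability $\tfrac12$), in which case they are the next $R$ level-$\ell$ intervals to the right of $q$, where $R$ is geometric with $\E{R} = 2$ (each successive level-$\ell$ boundary key is promoted independently with probability $\tfrac12$), and their weights are i.i.d.\ copies of the weight $\mathcal{W}_\ell$ of a generic level-$\ell$ interval. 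Since $R$, the sibling weights, and $w(I)$ are determined by pairwise-disjoint sets of key levels (keys right of $q$, the interiors of the sibling intervals, and the keys of $I$, respectively), they are mutually independent, so Wald's identity gives $\E{\sum_{\text{right sib.}} w(\cdot) \mid w(I), \ell} = \tfrac12 \cdot 2 \cdot \E{\mathcal{W}_\ell} = \E{\mathcal{W}_\ell}$; adding the symmetric left contribution yields $\E{w(U) \mid w(I), \ell} = w(I) + 1 + 2\,\E{\mathcal{W}_\ell}$, hence $\E{w(U)/w(I) \mid \ell} = 1 + (1 + 2\,\E{\mathcal{W}_\ell})\,\E{1/w(I) \mid \ell}$.

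It then remains to bound the two factors. For the first, the skip-list estimate $\E{w(J) \mid |X(J)|} \le \ell + 2|X(J)|$ (from the excerpt) together with $\E{|X(J)|} = 2^{\ell-1} - 1$ for a generic level-$\ell$ interval $J$ gives $\E{\mathcal{W}_\ell} < 2^{\ell+1}$. For the second, I would use $w(I) \ge 1 + |X(I)|$ together with the point (already invoked in this section through obliviousness of the adversary) that the interval $I$ that a reallocation actually rewrites is the level-$\ell$ interval \emph{containing the key $x$ of the current update}: letting $K_L, K_R$ be the numbers of keys of level $< \ell$ immediately left and right of $x$ inside $I$, these are independent geometric variables with $\Pr[K = k] = p(1-p)^k$ for $p = 2^{-(\ell-1)}$, and $|X(I)| \ge K_L + K_R$, so, using that $K_L + K_R$ is negative-binomial,
\[
  \E{1/w(I)} \;\le\; \E{\frac{1}{1 + K_L + K_R}} \;=\; p^2 \sum_{m \ge 0} (1-p)^m \;=\; p \;=\; 2^{-(\ell-1)}\,.
\]
(In a finite skip-list, when $x$ is near an end, $I$ only gets longer, so this upper bound still holds.) Combining, $\E{w(U)/w(I) \mid \ell} \le 1 + (1 + 2^{\ell+2})\,2^{-(\ell-1)} = 1 + 2^{-(\ell-1)} + 8 = O(1)$ uniformly in $\ell$, and the tower property over $\ell$ finishes the proof.

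The hard part is the second estimate: it must effectively give $\E{1/w(I)} = O\!\big(1 / \E{\mathcal{W}_\ell}\big)$, and this \emph{fails} for a uniformly random level-$\ell$ interval, where $\E{1/w(I)} = \Theta(\ell\,2^{-\ell})$ and one would only obtain a weight-balance of $\Theta(\ell) = \Theta(\log n)$ — matching the treap lower bound quoted in the remark rather than beating it. The resolution is the ``inspection-paradox'' observation above: because a reallocation rewrites the interval around a \emph{fixed} update key, its key count carries both a left and a right half-gap, which is exactly what pulls $\E{1/w(I)}$ down to $\Theta(2^{-\ell})$. Everything else — the weight recursion, the renewal/i.i.d.\ structure of consecutive level-$\ell$ intervals, Wald's identity, and the textbook skip-list expectations — should be routine.
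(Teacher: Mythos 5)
Your proposal follows essentially the same route as the paper's proof: fix the level $\ell$ of $I$, split $w(U)$ into the contribution of $I$ and the contributions to its left and right, note that these are governed by disjoint sets of coin flips and hence independent, bound the outside contribution by $O(2^{\ell})$ in expectation, and control the inside term by scanning outward from a point inside $I$ (the paper's ``arbitrary point $q$'' plays the role of your update key $x$). Where you are more explicit than the paper is precisely the step that matters: the paper only records $\E{n_I}=\Omega(2^{\ell})$ and then divides, whereas what the independence argument actually needs is a bound on the reciprocal, $\E{1/w(I)}=O(2^{-\ell})$; your negative-binomial computation $\E{1/(1+K_L+K_R)}=p$ supplies exactly this, and your observation that a generic level-$\ell$ interval only yields $\Theta(\ell\,2^{-\ell})$ — so the $O(1)$ balance hinges on $I$ being the \emph{two-sided} interval around the update key — is a correct sharpening of what the paper leaves implicit.

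The one step that does not hold as written is your parenthetical about the finite skip-list. If the scan from $x$ runs off the end of $X$, the interval $I$ gets longer in \emph{extent} but its \emph{key count} gets smaller: $K_R$ (say) becomes a truncated geometric, which makes $\E{1/(1+K_L+K_R)}$ larger, not smaller, so the bound $\E{1/w(I)}\leq 2^{-(\ell-1)}$ is not justified by ``$I$ only gets longer''. In the one-sided situation (the update key within $O(2^{\ell})$ keys of the maximum or minimum of $X$, e.g.\ the insert-at-the-front workload of Section~\ref{sec:smooth-gamma}) the reciprocal only satisfies $\E{1/w(I)}=\Theta(\ell\,2^{-\ell})$, while the siblings on the opposite side still contribute $\Theta(2^{\ell})$ in expectation, so your final display does not follow for those boundary intervals and the argument needs a separate treatment of the forms $(x,+\infty)$ and $(-\infty,x)$. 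To be fair, the paper's own proof handles exactly these two forms with the same one-sided scan and the same silent division by $n_I$, so this soft spot is inherited from, rather than introduced relative to, the paper's argument; but since you singled out the reciprocal estimate as the crux, the boundary case is the place where your write-up still has a genuine gap.
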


Since $w(U) > w(I)$, the random variables $w(U)$ and $w(I)$ are dependent.

\begin{proof}
Let $\ell+1$ be the level number of $U$ and $\ell=level(I)$ the level number of $I$.
Consider the partition 
\[
X(U) ~=~ X^- ~\cup~ X(I) ~\cup~ X^+
\]
of the keys in $U$ into, those keys $X(I)$ that are contained in $I$, those keys $X^-$ that are smaller than those keys in $I$, and those keys $X^+$ that are larger than those keys in $I$.
The proof shows a stronger property, namely that the key-count~$n^-=|X^-|$ and the key-count~$n^+=|X^+|$ have an upper bounds that do not depend on the key-count $n_I=|X(I)|$.
The lemma's bound on the weight-ratio follows from this, since the keys' levels in a skip-list are determined by independent coin tosses.

There are three possible cases for the form of interval $I$, these are 
$I=(x,+\infty)$ for some key $x$ that has level $\geq \ell$,
$I=(-\infty,x)$ for some key $x$ that has level $\geq \ell$, and
$I=(x,x')$ for two distinct keys where one has level $\ell$ and the other level $\geq \ell$.

In case $I=(x,+\infty)$, we have that $n^+=0$. 
Thus, it suffices to bound $\E{\frac{n^-+n_I}{n_I}}$. %
Consider an arbitrary point $q \in \mathbb{R} \setminus X$.
Walking to the left of $q$, we count keys, one by one, as the walk traverses them until we observe that a key has level $\geq\ell$.
Since the probability to stop counting is $\Theta(1/2^\ell)$, we have $\E{n_I}=\Omega(2^{\ell})$.
Moreover, counting again with $q$ being immediately left of $x$ until we observe a key that has level $\geq \ell+2$ yields a number of keys $n^- $ that has $\E{n^-}= O(2^{\ell})$.
The expectation bound for $n^-$ is an upper bound, as the argument assumes that there are an infinite number of keys to the left of $q$, which is not really the case as $X$ is finite.
Since these two key-intervals are disjoint, the random variables $n_I$ and $n^-$ are independent. %
Thus $\E{\frac{n^-+n_I}{n_I}}= \E{\frac{n^-}{n_I}} +1 = O(1)$, where the last inequality uses the aforementioned independence. 

The case $I=(-\infty,x)$ is symmetric so that we can argue analogously.

In case $I=(x,x')$, counting the keys in $I$ first shows that $\E{n_I}=\Omega(2^\ell)$ for this case.
Now, counting the keys $n^-$ to the left until we observe one key with level $\geq \ell + 2$ shows that $n^- =O(2^{\ell})$,
and counting the keys $n^+$ to the right until we observe one key with level $\geq \ell + 2$ shows that $n^+ =O(2^{\ell})$.
The expectation bounds for $n^-$ and $n^+$ are again upper bounds.
Since the three intervals don't overlap, the three outcomes are independent.
Thus, the expected ratio
$ 
        \E{\frac{n^- + n_I + n^+}{n_I}} 
=       \E{ \frac{n^-}{n_I}+ 1 +\frac{n^+}{n_I}}
=       O(1),
$
where 
the inequality uses the aforementioned independence.
\end{proof}

Since there are with high probability $O(\log n)$ levels in a skip-list, 
every update operation is contained with high probability in $O(\log n)$ intervals, each of which charges it $O(\log \log n)$ times during it's budget-phase.
As a consequence, the expected amortized write-cost of Ancestor-Reallocations is $O(\eps^{-1}\log (n) \log \log (n))$, as we have shown before that the amortized cost of proactive reallocation of any interval is $O(\eps^{-1} w(U)/w(I))$.

The only choice that the adversary has is the rank of the key $x$ that is inserted or deleted, but our analysis holds regardless of the rank of the requested update. 
For the adversary to obtain a larger write-cost for Parent-Reallocations, the adversary would need to choose a key $x$ for the update that is in an interval range of a subtree root with a weight that is larger than expected. 
For obtaining a larger amortized write-cost for Ancestor-Reallocations, the adversary would need to choose a key $x$ for updates in an interval range of two subtrees with a parent-child weight-balance larger than expected.
Either is only possible if the adversary knows the outcome of the coin tosses that define the underlying skip-list, and so the adversarial updates can, in expectation, cost no more than stated.

\paragraph*{Acknowlegments}
The author wants to thank the anonymous reviewers for pointing out an error in a draft on an early form of the algorithm that resulted into an overflow on the workload from~\cite{DietzZ90-log2N-lower-bound,zhang1993density}, cf. new Lemma~\ref{lem:no-overflow} and Section~\ref{sec:adaptive}.
Special thanks to Maximilian Vötsch for many discussions and collaborating on the earlier version of this paper.

This project has received funding from the European Research Council (ERC)  under the European Union's Horizon 2020
\marginpar{\includegraphics[height=22pt]{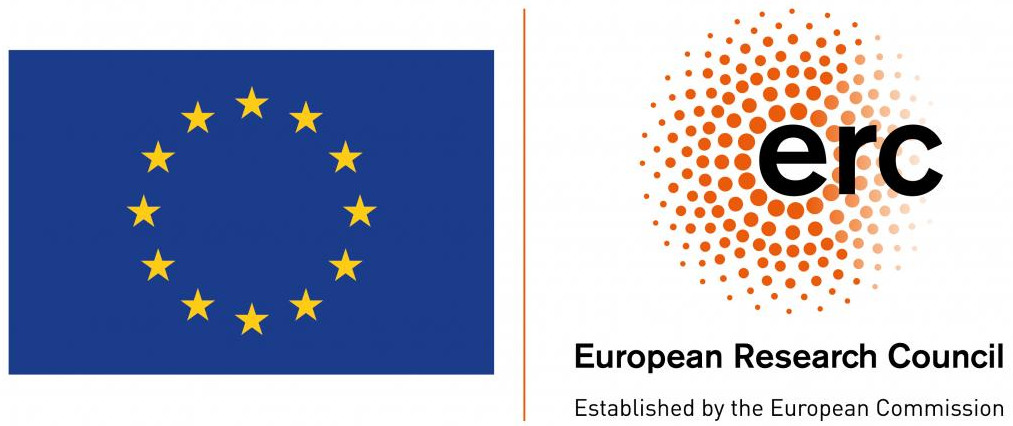}}
research and innovation programme (Grant agreement No. 101019564) and the Austrian Science Fund (FWF) project Z~422-N, project I~5982-N, and project P~33775-N, with additional funding from the \textit{netidee SCIENCE Stiftung}, 2020--2024.

\bibliographystyle{beta}
\bibliography{bibliography}

\newcommand{\etalchar}[1]{$^{#1}$}
\begin{thebibliography}{BCFC{\etalchar{+}}22}

\bibitem[AL90]{dump-AnderssonL90}
Arne Andersson and Tony~W. Lai.
\newblock \href{https://doi.org/10.1007/3-540-52846-6\_82}{Fast Updating of
  Well-Balanced Trees}.
\newblock In {\em Proc. of 2nd Scandinavian Workshop on Algorithm Theory
  (SWAT'90)}, {p.} 111--121, 1990.

\bibitem[And89]{Andersson89}
Arne Andersson.
\newblock \href{https://doi.org/10.1007/3-540-51542-9\_33}{Improving Partial
  Rebuilding by Using Simple Balance Criteria}.
\newblock In {\em Proc. of Workshop on Algorithms and Data Structures
  (WADS'89)}, {p.} 393--402, 1989.

\bibitem[BBC{\etalchar{+}}19]{dump-BabkaBCKS19}
Martin Babka, Jan Bul{\'{a}}nek, Vladim{\'{\i}}r Cun{\'{a}}t, Michal
  Kouck{\'{y}}, and Michael~E. Saks.
\newblock \href{https://doi.org/10.1137/17M1117458}{On Online Labeling with
  Large Label Set}.
\newblock {\em {SIAM} J. Discret. Math.}, 33(3):1175--1193, 2019.

\bibitem[BCD{\etalchar{+}}02]{BenderCDFZ02}
Michael~A. Bender, Richard Cole, Erik~D. Demaine, Martin Farach{-}Colton, and
  Jack Zito.
\newblock \href{https://doi.org/10.1007/3-540-45749-6\_17}{Two Simplified
  Algorithms for Maintaining Order in a List}.
\newblock In {\em Proc. of 10th European Symposium on Algorithms (ESA'02)},
  {p.} 152--164, 2002.

\bibitem[BCF{\etalchar{+}}22]{bender2022full}
Michael~A. Bender, Alexander Conway, Martin Farach{-}Colton, Hanna
  Koml{\'{o}}s, William Kuszmaul, and Nicole Wein.
\newblock \href{https://doi.org/10.48550/arXiv.2203.02763}{Online List
  Labeling: Breaking the log\({}^{\mbox{2}}\)n Barrier}.
\newblock {\em CoRR}, abs/2203.02763, 2022.

\bibitem[BCFC{\etalchar{+}}22]{bender2022online}
Michael~A Bender, Alex Conway, Mart{\'\i}n Farach-Colton, Hanna Koml{\'o}s,
  William Kuszmaul, and Nicole Wein.
\newblock \href{https://doi.org/10.1109/FOCS54457.2022.00096}{Online List
  Labeling: Breaking the log \({}^{\mbox{2}}\) n Barrier}.
\newblock In {\em Proc. of 63rd Symposium on Foundations of Computer Science
  (FOCS'22)}, {p.} 980--990, 2022.

\bibitem[BDF05]{dump-BenderDF05}
Michael~A. Bender, Erik~D. Demaine, and Martin Farach{-}Colton.
\newblock \href{https://doi.org/10.1137/S0097539701389956}{Cache-Oblivious
  B-Trees}.
\newblock {\em {SIAM} J. Comput.}, 35(2):341--358, 2005.

\bibitem[BDF{\etalchar{+}}10]{dump-BrodalDFILM10}
Gerth~St{\o}lting Brodal, Erik~D. Demaine, Jeremy~T. Fineman, John Iacono,
  Stefan Langerman, and J.~Ian Munro.
\newblock \href{https://doi.org/10.1137/1.9781611973075.117}{Cache-Oblivious
  Dynamic Dictionaries with Update/Query Tradeoffs}.
\newblock In {\em Proc. of 21st Symposium on Discrete Algorithms (SODA'10)},
  {p.} 1448--1456, 2010.

\bibitem[BFG{\etalchar{+}}17]{dump-BenderFGKM17}
Michael~A. Bender, Jeremy~T. Fineman, Seth Gilbert, Tsvi Kopelowitz, and Pablo
  Montes.
\newblock \href{https://doi.org/10.1137/1.9781611974782.98}{File Maintenance:
  When in Doubt, Change the Layout!}
\newblock In {\em Proc. of 28th Symposium on Discrete Algorithms (SODA'17)},
  {p.} 1503--1522, 2017.

\bibitem[BFJ02]{dump-BrodalFJ02}
Gerth~St{\o}lting Brodal, Rolf Fagerberg, and Riko Jacob.
\newblock \href{http://dl.acm.org/citation.cfm?id=545381.545386}{Cache
  oblivious search trees via binary trees of small height}.
\newblock In {\em Proc. of 13th Symposium on Discrete Algorithms (SODA'02)},
  {p.} 39--48, 2002.

\bibitem[BG07]{BlellochG07}
Guy~E. Blelloch and Daniel Golovin.
\newblock \href{https://doi.org/10.1109/FOCS.2007.36}{Strongly
  History-Independent Hashing with Applications}.
\newblock In {\em Proc. of 48th Symposium on Foundations of Computer Science
  (FOCS'07)}, {p.} 272--282, 2007.

\bibitem[BH07]{dump-BenderH07}
Michael~A. Bender and Haodong Hu.
\newblock \href{https://doi.org/10.1145/1292609.1292616}{An adaptive
  packed-memory array}.
\newblock {\em {ACM} Trans. Database Syst.}, 32(4):26, 2007.

\bibitem[BKS12]{BulanekKS12-log2n-lb-det-conf}
Jan Bul{\'{a}}nek, Michal Kouck{\'{y}}, and Michael~E. Saks.
\newblock \href{https://doi.org/10.1145/2213977.2214083}{Tight lower bounds for
  the online labeling problem}.
\newblock In {\em Proc. of 44th Symposium on Theory of Computing (STOC'12)},
  {p.} 1185--1198, 2012.

\bibitem[BKS13]{BulanekKS13-logN-lower-bound}
Jan Bul{\'{a}}nek, Michal Kouck{\'{y}}, and Michael~E. Saks.
\newblock \href{https://doi.org/10.1007/978-3-642-39206-1\_25}{On Randomized
  Online Labeling with Polynomially Many Labels}.
\newblock In {\em Proc. of 40th International Colloquium on Automata,
  Languages, and Programming (ICALP'13)}, {p.} 291--302, 2013.

\bibitem[BKS15]{BulanekKS12-log2n-lb-det-journal}
Jan Bul{\'{a}}nek, Michal Kouck{\'{y}}, and Michael~E. Saks.
\newblock \href{https://doi.org/10.1137/130907653}{Tight Lower Bounds for the
  Online Labeling Problem}.
\newblock {\em {SIAM} J. Comput.}, 44(6):1765--1797, 2015.

\bibitem[DFGK17]{dump-DevannyFGK17}
William~E. Devanny, Jeremy~T. Fineman, Michael~T. Goodrich, and Tsvi
  Kopelowitz.
\newblock \href{https://doi.org/10.4230/LIPIcs.ESA.2017.33}{The Online House
  Numbering Problem: Min-Max Online List Labeling}.
\newblock In {\em Proc. of 25th European Symposium on Algorithms (ESA'17)},
  {p.} 33:1--33:15, 2017.

\bibitem[DSZ04]{DietzSZ04-journal-wc}
Paul~F. Dietz, Joel~I. Seiferas, and Ju~Zhang.
\newblock \href{https://doi.org/10.1137/S0895480100315808}{A Tight Lower Bound
  for Online Monotonic List Labeling}.
\newblock {\em {SIAM} J. Discret. Math.}, 18(3):626--637, 2004.

\bibitem[DZ90]{DietzZ90-log2N-lower-bound}
Paul~F. Dietz and Ju~Zhang.
\newblock \href{https://doi.org/10.1007/3-540-52846-6\_87}{Lower Bounds for
  Monotonic List Labeling}.
\newblock In {\em Proc. of 2nd Scandinavian Workshop on Algorithm Theory
  (SWAT'90)}, {p.} 173--180, 1990.

\bibitem[IKR81]{ItaiKR81}
Alon Itai, Alan~G. Konheim, and Michael Rodeh.
\newblock \href{https://doi.org/10.1007/3-540-10843-2\_34}{A Sparse Table
  Implementation of Priority Queues}.
\newblock In {\em Proc. of 8th International Colloquium on Automata, Languages
  and Programming (ICALP'81)}, {p.} 417--431, 1981.

\bibitem[Mul94]{mul-book}
Ketan Mulmuley.
\newblock {\em Computational Geometry: An Introduction Through Randomized
  Algorithms}.
\newblock Prentice Hall, 1994.

\bibitem[Pug90]{Pugh90}
William~W. Pugh.
\newblock \href{https://doi.org/10.1145/78973.78977}{Skip Lists: {A}
  Probabilistic Alternative to Balanced Trees}.
\newblock {\em Commun. {ACM}}, 33(6):668--676, 1990.

\bibitem[Ram99]{dump-Raman99}
Vijayshankar Raman.
\newblock \href{https://doi.org/10.1145/303976.304009}{Locality Preserving
  Dictionaries: Theory and Application to Clustering in Databases}.
\newblock In {\em Proc. of 18th Symposium on Principles of Database Systems
  (PODS'99)}, {p.} 337--345, 1999.

\bibitem[SA96]{seidel1996randomized}
Raimund Seidel and Cecilia~R Aragon.
\newblock Randomized search trees.
\newblock {\em Algorithmica}, 16(4-5):464--497, 1996.

\bibitem[Sak18]{saks2018online}
Michael Saks.
\newblock \href{https://doi.org/10.1007/978-3-319-90530-3\_3}{Online labeling:
  Algorithms, lower bounds and open questions}.
\newblock In {\em Proc. of 13th International Computer Science Symposium in
  Russia (CSR'18)}, {p.} 23--28, 2018.

\bibitem[Wil82]{dump-Willard82}
Dan~E. Willard.
\newblock \href{https://doi.org/10.1145/800070.802183}{Maintaining Dense
  Sequential Files in a Dynamic Environment (Extended Abstract)}.
\newblock In {\em Proc. of 14th Symposium on Theory of Computing (STOC'82)},
  {p.} 114--121, 1982.

\bibitem[Wil92]{dump-Willard92}
Dan~E. Willard.
\newblock \href{https://doi.org/10.1016/0890-5401(92)90034-D}{A Density Control
  Algorithm for Doing Insertions and Deletions in a Sequentially Ordered File
  in Good Worst-Case Time}.
\newblock {\em Inf. Comput.}, 97(2):150--204, 1992.

\bibitem[ZHA93]{zhang1993density}
JU~ZHANG.
\newblock \href{http://hdl.handle.net/1802/817}{Density control and on-line
  labeling problems}.
\newblock {\em PhD Thesis at the University of Rochester (Dept. of Computer
  Science)}, 1993.

\end{thebibliography}

\end{document}